\newtheorem{theorem}{Theorem}
\newtheorem{proposition}[theorem]{Proposition}
\newtheorem{definition}{Definition}
\newtheorem{corollary}[theorem]{Corollary}
\newtheorem{theorem*}{Theorem}
\newtheorem{corollary*}{Corollary}
\def\Tr{\textnormal{Tr}}
\def\Supp{\textnormal{Supp }}
\def\<{\langle}
\def\>{\rangle}
\def\Seq{\textnormal{Seq}}
\newcommand{\commentout}[1]{}
\numberwithin{theorem}{section}
\numberwithin{equation}{section}
\newcommand{\talkingPoint}[1]{}
\title{Universal security for randomness expansion from the spot-checking
protocol}
\author{%
   Carl~A.~Miller and Yaoyun~Shi  \\
 \\
  Department of Electrical Engineering and Computer Science\\
  University of Michigan, Ann Arbor, MI 48109, USA\\
  \texttt{{carlmi,shiyy@umich.edu}}
}
\date{}
\begin{document}
\maketitle
\thispagestyle{empty}

\begin{abstract}\noindent
Colbeck (\textit{Thesis}, 2006) proposed using Bell inequality violations to generate
certified random numbers.  While full quantum-security proofs have been given,
it remains a major open problem to identify the broadest class of Bell inequalities
and lowest performance requirements to achieve such security.  In this paper,
working within the broad class of \textit{spot-checking protocols}, we prove exactly
which Bell inequality violations can be used to achieve full security.
Our result greatly improves the known noise tolerance for secure randomness expansion: for the commonly used CHSH game, full security was only known with a noise tolerance of 1.5\%, and we improve this to 10.3\%.  We also generalize our results
beyond Bell inequalities and give the first security proof for randomness
expansion based on Kochen-Specker inequalities.
The central technical contribution of the paper is a new uncertainty principle for the Schatten
norm, which is based on the uniform convexity inequality of Ball, Carlen, and Lieb (\textit{Inventiones mathematicae}, 115:463--482, 1994).
\end{abstract}

\clearpage
\setcounter{page}{1}

\section{Introduction}\label{sec:motivation}
Randomness is indispensable for modern day information processing.
Secure generation of randomness is at the very foundation of modern cryptography:
a message is secretive precisely when
it is random to the adversary. But the generation of true randomness 
is challenging both theoretically and practically. A fundamental difficulty is the fact
that there is no complete test of randomness --- all randomness tests can be fooled by
a deterministic generator. 
And indeed, current solutions have problems: for example, Heninger {\em et al.}~\cite{Heninger:2012} and Lenstra {\em et al.}~\cite{Lenstra+}
have found independently that a significant percentage of cryptographic keys can be broken due to
the lack of entropy. One of NIST's previous standards for pseudorandom number generation is widely believed to be backdoored~\cite{NYT:Snowden}.

More recently, RNGs based on quantum measurements have emerged in the market (e.g., IDQuantique's Quantis).
While a (close to) perfect implementation of certain measurements can theoretically
guarantee randomness, current technology is still far from reaching that precision.
An additional concern is, even if in the future the implementation technology is satisfactory,
could there be backdoors in the generator inserted by a malicious party? It is difficult for the user,
as a classical being, to directly verify the inner-working of the quantum device. 

In his Ph.D. thesis~\cite{col:2006}, Colbeck formulated the problem of untrusted-device randomness expansion, and proposed a protocol based on non-local games. In Colbeck's protocol, a nonlocal game is played repeatedly (using
an initial random seed) on
a multi-part quantum device.  The proposed basis for security is that if the device exhibits a superclassical
average score, then it must be
exhibiting quantum, and therefore random, behavior.
If a ``success'' event occurs in the protocol --- if the average score is above
a certain acceptance threshhold --- then the outputs produced by the device
are assumed to be partially random, and a quantum-proof
randomness extractor (e.g., \cite{De:2012}) is applied to produce a shorter string which represents the final
output of the protocol.  

The goal of the aforementioned protocol is that the final output be a uniformly
random string, thus achieving  \textit{randomness
expansion} --- starting with initial random seed and obtaining
a larger random output.  Subsequent authors observed that one can
minimize the size of the seed by skewing the input distribution used for the nonlocal game (e.g., by 
giving the device a fixed input string on most rounds).

Proving security for such a protocol requires showing
that, when the protocol succeeds, there is a uniform lower bound $H_{min}^\delta ( X \mid
E ) \geq C$ on the smooth min-entropy of $X$ (the raw outputs of the device)
conditioned on $E$ (quantum information that may be possessed by 
an adversary).  
This bound must be proved to hold for all quantum devices
compatible with the nonlocal game.
Once this is established, it follows that any
quantum-proof randomness extractor
can be used to produce roughly $C$ random bits \cite{R05}.  (See \cite{BertaFS, ChungLW} for recent
work on quantum-proof randomness extractors.)

Several authors proved security of Colbeck-type protocols against classical side information only~\cite{pam:2010, fehr13, pironio13, CoudronVY:2013}. In a groundbreaking work, Vazirani and Vidick~\cite{Vazirani:dice} proved
full quantum security. Their protocol provided
exponential randomness expansion, and was later extended to unbounded expansion \cite{CY:STOC}.  The quantum protocol of \cite{Vazirani:dice}
included an 
exact requirement on the performance of the device, and so it remained an
open question whether quantum security could be proved for a robust
(error-tolerant) protocol.

In~\cite{MS14v3}, the present authors proved quantum security with error-tolerance,
via a new set of techniques based on Renyi entropies.    We proved other new features (cryptographic
security, unit size quantum memory, nonzero bit-rate) and also  proved, with Chung and Wu, that error-tolerant unbounded randomness
expansion was possible (\cite{CSW14:short, MS14v3, Gross}).
Our protocol 
used a class of $n$-player binary XOR games, and the noise tolerance
was significant ($\geq 1.5\%$) but not proven to be optimal.

The motivating question for the current paper is this: what are the most general conditions under which randomness expansion
can be proved with full quantum security?
Answering this question is important for lowering the implementation
requirements on randomness expansion protocols, and thus
improving their practical value. 

\subsection{Central result}

\label{centralresultsubsec}

We wish to answer the following questions:
\begin{enumerate}
\item What devices can achieve secure randomness
expansion?

\item What games can be used to achieve secure randomness expansion?

\item What is the largest noise tolerance for each game?
\end{enumerate}
In this paper we will limit our focus in the following way: we will deal only with protocols in which the Bell inequality (or other device 
test) is the same on each round, and is applied using independently generated inputs.
(This is a natural assumption, although we note that the literature
includes protocols that do not satisfy this assumption, e.g., \cite{Vazirani:dice}.  
Answering questions 1-3 for such multi-stage protocols is another interesting
avenue for research.)

Given that a protocol is using the same Bell inequality on each round, 
the only way to achieve superlinear randomness expansion is to adjust
the input distribution for the Bell inequality so that a certain fixed input
$\overline{a}$ occurs with probability $(1-q)$, where $q$ is a parameter
that we make small when the number of rounds ($N$) is large.  (Without this requirement,
the amount of seed needed to generate the random inputs would have to be at least linearly proportional to the
number of extractable bits in the output.)  This motivates a protocol
which we import from \cite{CoudronVY:2013}, referred
to here as the \textit{spot-checking protocol}.  
A general
version of the spot-checking protocol is shown in Figure~\ref{roverlineprot}.
(In \cite{CoudronVY:2013}, this protocol was showed to be
secure against classical adversaries.)

Let $E$ be an
external quantum system which may be entangled with the device $D$.
Let $A, X, T$ be classical registers denoting, respectively, the collected
inputs of $D$, the collected outputs of $D$, and the bits $(t_1, \ldots, t_N)$
across all $N$ rounds.  Let $\Gamma = \Gamma_{AXTE}$ denote the joint state
of these registers taken together, and let $\Gamma^s \leq \Gamma$ denote
the subnormalized operator corresponding to the ``success'' event.  If
the normalization of $\Gamma^s$ satisfies
\begin{eqnarray}
H_{min} \left( X \mid ATE \right) \geq y,
\end{eqnarray}
we say that Protocol $R_{gen}$ has produced $y$ extractable bits.
If $\Gamma_s$ is within trace-distance $\delta$ of an operator
$\Gamma'$ satisfying the same condition, or is within trace-distance
$\delta$ of the zero state, then
we say that Protocol $R_{gen}$
has produced $y$ extactable bits with soundness error $\delta$.

For any nonlocal game $G$, let $W_G$ be the supremum
of all expected scores that can be achieved at $G$ by compatible quantum devices,
and let $W_{G , \overline{a}}$ denote the same supremum taken
just over devices that give deterministic outputs on input $\overline{a}$.
Our central result is summed up by the following theorem.

\begin{theorem}
\label{mainthmsimple}
For any game $G$, there are functions $\pi \colon [0, W_G] \to \mathbb{R}$ 
and $\Delta \colon (0, 1]^2 \to \mathbb{R}$ such that the following
hold.

\begin{enumerate}
\item For any $b \in (0, 1]$, Protocol $R_{gen}$ produces at least
\begin{eqnarray}
N \left[ \pi ( \chi ) - \Delta ( b, q) \right]
\end{eqnarray} 
extractable bits with soundness error $3 \cdot 2^{-bqN}$.

\item The function $\pi$ is nonzero on the interval $( W_{G, \overline{a}} , W_G ]$.

\item The function $\Delta$ tends to $0$
as $(b,q) \to (0, 0 )$.
\end{enumerate}
\end{theorem}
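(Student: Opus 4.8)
The plan is to reduce the $N$-round statement to a single-round uncertainty relation and then accumulate the per-round entropy estimates. I would first define the rate function $\pi$ from a single-round quantity: for a threshold score $\chi$, let $\pi(\chi)$ be the infimum, over all single-round device behaviors that achieve expected game score $\chi$, of the conditional R\'enyi entropy produced in the output on the generation input $\overline{a}$. Throughout I would work with a sandwiched R\'enyi entropy $H_\alpha$ at $\alpha = 1+\epsilon$ for small $\epsilon > 0$, for two reasons: it dominates the smooth min-entropy up to an additive loss of order $\tfrac{1}{\epsilon}\log(1/\delta)$ (which will feed into $\Delta$), and it is governed by the Schatten $(1+\epsilon)$-norm, for which $\log \|\rho\|_{1+\epsilon} = -\tfrac{\epsilon}{1+\epsilon} H_\alpha(\rho)$, so that entropy production corresponds to norm contraction and per-round factors multiply.

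The technical heart, and \textbf{the step I expect to be the main obstacle}, is the single-round bound. Here I would invoke the Schatten-norm uncertainty principle derived from the Ball--Carlen--Lieb $2$-uniform-convexity inequality to show that a device achieving score $\chi$ on a round must contract the Schatten $(1+\epsilon)$-norm of its conditioned output state on $X$ by a factor roughly $2^{-\tfrac{\epsilon}{1+\epsilon}\pi(\chi)}$. The content of the uncertainty principle is that an adversary holding $E$ cannot both predict the output on $\overline{a}$ and be consistent with the coherence needed to beat the classical value of the game; uniform convexity is what converts that incompatibility into a quantitative shrinkage of the norm. Establishing this translation --- and, crucially, showing the bound is nonvacuous exactly when $\chi > W_{G,\overline{a}}$ --- is the crux of the argument.

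Granting the single-round bound, I would carry out the accumulation by induction on the rounds, tracking the Schatten $(1+\epsilon)$-norm of the running joint state conditioned on $ATE$ and on the transcript revealed so far. Because the input $\overline{a}$ is chosen independently each round and the conditioned norm submultiplies, the $N$ single-round factors compound to give a bound of the form $H_\alpha(X \mid ATE) \geq N\,\pi(\chi) - (\text{correction})$, with the induction also absorbing the adversary's adaptivity. The spot-checking structure enters because the empirical score is observed only on the $q$-fraction of test rounds: I would use a martingale (Azuma-type) concentration argument to certify that, off a failure event, the true per-round score meets $\chi$, and it is this concentration that produces the $2^{-bqN}$ in the soundness error, with the factor $3$ collecting the separate bad events. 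Converting $H_\alpha$ back to $H_{min}^\delta$ supplies the remaining part of $\Delta(b,q)$.

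Finally I would verify properties (2) and (3). For (2), the contrapositive of the single-round uncertainty principle gives it: a device with deterministic (zero-entropy) output on $\overline{a}$ cannot, by definition of $W_{G,\overline{a}}$, score above $W_{G,\overline{a}}$, so any $\chi$ strictly above that threshold forces $\pi(\chi) > 0$, while $W_G$ is attained by some device. For (3), the correction $\Delta(b,q)$ decomposes into the R\'enyi-to-min-entropy conversion loss and the concentration slack; each piece is controlled by $\epsilon$, $b$, and $q$, so choosing $\epsilon = \epsilon(b,q) \to 0$ at an appropriate rate makes $\Delta(b,q) \to 0$ as $(b,q) \to (0,0)$.
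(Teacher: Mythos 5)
Your single-round analysis coincides with the paper's: the uncertainty principle from Ball--Carlen--Lieb uniform convexity (Propositions~\ref{curvatureprop}--\ref{ternaryprop}), the score bound for classically predictable devices (Proposition~\ref{upperlimitprop}), and the contrapositive argument giving a quantitative rate curve that is nonzero exactly above $W_{G,\overline{a}}$ (Theorem~\ref{fixedinputthm}). The genuine gap is in your multi-round step. You propose to compound per-round norm contractions by induction and then, separately, to run an Azuma-type concentration argument certifying that ``off a failure event, the true per-round score meets $\chi$.'' This is exactly the step that does not go through against quantum side information. The success event is a global property of the whole transcript, so conditioning on it does not commute with round-by-round bookkeeping; moreover, even on success only the \emph{total} score is certified --- an adaptive device with quantum memory can make the individual rounds' conditional scores arbitrarily uneven, so there is no single ``true per-round score'' to feed into the single-round bound. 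Turning ``average score $\geq \chi$'' into ``accumulated randomness $\geq N\pi(\chi)$'' requires a convexity mechanism living \emph{inside} the entropy accounting, not a concentration bound applied after the fact; this obstruction is precisely why the Azuma-style proofs cited in the introduction were limited to classical adversaries.

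The paper's way around this is the weighted $(1+\epsilon)$-randomness $R^{\epsilon,s}_{G_q}(D)$, in which every trajectory is weighted by $2^{\epsilon s H_q(\mathbf{i},\mathbf{x})}$ with $s = \pi'(\chi)$. The point is that the per-round bound of Proposition~\ref{weightedkeyprop}, namely $R^{\epsilon,\pi'(r)}_{G_q}(D) \geq \pi(r) - \pi'(r)r - O(q+\epsilon/q)$, holds \emph{uniformly over all devices, with no conditioning whatsoever}, and hence composes cleanly under induction on rounds (Theorem~\ref{succstatethm}). The success event is then handled purely algebraically at the end: on success $H_q(\mathbf{i},\mathbf{x}) \geq \chi N$, so one drops the failing trajectories, replaces the weight $2^{\epsilon \pi'(\chi) H_q}$ by $2^{\epsilon\pi'(\chi)\chi N}$, and recovers $N\pi(\chi)$ from the convexity identity $\min_t[\pi(t)-\pi'(\chi)t] + \pi'(\chi)\chi = \pi(\chi)$ (cf.\ Proposition~\ref{devindboundprop}). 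In effect the exponential weighting \emph{is} the Chernoff/Azuma argument, folded into the Renyi bookkeeping so that it coexists with quantum side information. Note also that this mechanism needs $\pi$ convex and differentiable --- its derivative is the weight --- so your definition of $\pi$ as a bare infimum would have to be replaced by, or bounded below by, the paper's explicit quadratic curve. Your remaining steps --- the conversion to smooth min-entropy via Theorem~\ref{translatetominthm}, the choice $\epsilon \asymp \sqrt{q\log(1/\delta)/N}$, and the verification of properties (2) and (3) --- do match the paper's Theorem~\ref{acentralsecthm} and Corollary~\ref{finalcor}.
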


Briefly stated, the theorem asserts that Protocol $R_{gen}$ always achieves
secure expansion provided that $\chi > W_{G, \overline{a}}$.  
The function $\Delta$ is an error term which vanishes when the
test probability $q$ and the soundness error term $b$ are sufficiently small.
Crucially, this function
depends only on $G$ and not on any other parameters in
Protocol $R_{gen}$.  The bound is therefore device-independent.

We note that the noise tolerance in this result is optimal: 
if the noise threshhold $\chi$ were less than
$W_{G, \overline{a}}$, then Protocol $R_{gen}$
clearly could not produce superlinear expansion,
since the device can always behave deterministically
on generation rounds.  We therefore have complete answers to questions 1--3 above.
The upper limit for the noise tolerance for $G$ is $W_{G, \overline{a}}$;  
the games that can be used in the spot-checking protocol
are precisely those that have $W_{G, \overline{a}} < W_G$.  And the devices
that can be used in the spot-checking protocol are precisely those
which exceed $W_{G, \overline{a}}$ for some $G$.

In a previous draft of this paper, we informally claimed
that \textit{all} superclassical devices can be used for exponential randomness
expansion, but this was an overstatement of our results.
Actually, there are superclassical devices $D$ that do not exceed
the threshhold $W_{G, \overline{a}}$ for any $G$ (see Appendix~\ref{magicapp}).
Such devices cannot be effectively used in the spot-checking protocol,
but could potentially be used in other randomness expansion protocols, and studying their use remains
an open problem.

Following a method used in \cite{CoudronVY:2013},
if we let $q = (\log^2 N )/ N$ in Protocol $R_{gen}$, 
then it will require only $O_G ( \log^3 N )$ bits of initial seed to approximate
the input distribution $TA$, and $O_G ( \log^2 N )$ bits of initial seed
to perform randomness extraction on $X$ once the protocol concludes
\cite{De:2012}.  Since the number of final random bits
is $\Theta ( N )$, exponential randomness expansion is achieved.

\subsection{Beyond nonlocal games}

Our proof of Theorem~\ref{mainthmsimple} builds on methods
from our previous paper \cite{MS14v3}.  
In that paper we focused on untrusted devices $D$ that make
measurements $\{ P_\mathbf{a}^\mathbf{x} \}$ of the form
\begin{eqnarray}
P_\mathbf{a}^\mathbf{x} & = & P_{1, a_1}^{x_1} \otimes
\cdots \otimes P_{n, a_n}^{x_n}.
\end{eqnarray}
(where $\mathbf{a}$ denotes an input sequence and $\mathbf{x}$ denotes
an output sequence).
This tensor product form reflects the fact that $D$ has $n$ spatially separated
components.  In the present paper we considered whether this assumption
can be replaced with a different assumption about the measurements of $D$.

Recent papers \cite{HorodeckiH:2010, AbbottC:2012, Um:contextual, DengZ:2013}
have analyzed randomness expansion protocols based on \textit{contextuality},
rather than spatial separation.  We were therefore motivated to generalize
our proof to include contextual randomness expansion as well.  We do this
by defining a \textbf{contextual device} to be a device which accepts input
sequences $\mathbf{a} = (a_1, \ldots, a_m )$ (not necessarily of uniform length)
and returns output sequences $\mathbf{x} = (x_1, \ldots, x_m )$, with the only restriction
that the measurements performed by the device have the form
\begin{eqnarray}
P_\mathbf{a}^\mathbf{x} & = & P_{1, a_1}^{x_1} P_{2, a_2}^{x_2}
\cdot \cdots \cdot P_{m, a_m}^{x_m},
\end{eqnarray}
and the individual measurements $\{ \{ P_{1, a_i}^{x_i} \}_{x_i} \}_{i=1}^m$ are required
to be simultaneously diagonalizable.   (See Definition~\ref{contextualdef}.)
We prove security for such devices in parallel to spatially separated devices, thus showing that contextuality alone is sufficient
to prove quantum-secure randomness expansion.  This is also a new achievement.

\subsection{Rate curves}

An interesting unsolved problem left by our work is maximizing the
rate curves $\pi$ in Theorem~\ref{mainthmsimple}.  The rate curve proved
in Theorem~\ref{fixedinputthm} is the following:
\begin{eqnarray}
\label{quotedrc}
\pi_G ( x ) & = & \left\{ \begin{array}{cl}
\frac{ 2 (\log e ) ( x - W_{G, \overline{a}} )^2 }{r - 1 } & \textnormal{ if }
x > W_{G, \overline{a}} \\ \\
 0 & \textnormal{ otherwise,} 
\end{array} \right.
\end{eqnarray}
where $r$ is the size of the total output alphabet for the device $D$.
This is clearly not optimal: for example, we proved in \cite{MS14v3} that the rate curve for the GHZ game
is at least
\begin{eqnarray}
x \mapsto 1 - 2 H ( (1-x) / 0.11)
\end{eqnarray}
for $x > 0.89$, where $H$ denotes the Shannon entropy function,
and this exceeds
(\ref{quotedrc}) for values of $x$ near $1$.  Improving these curves
is vital for maximizing the performance of quantum random number
generation.  This problem is distilled in Section~\ref{rcsec}
(in terms of the function $X \mapsto \Tr [X^{1+\epsilon}]$)
and can be studied as a separate problem.

\begin{figure}[h]
\begin{center}
\setlength{\fboxsep}{10pt}
\setlength{\fboxrule}{1pt}
\fbox{\parbox{\linewidth}{

\textit{Arguments:}

\begin{enumerate}[itemsep=-3pt,topsep=3pt]
\item[$G$]: A  game with a distinguished input $\overline{a}$.
\item[$D$]: A quantum device compatible with $G$.
\item[$N$]: A positive integer (the \textbf{output length}). 
\item[$q$]: A real number from the interval $( 0, 1 )$.  (The
\textbf{test probability}.)
\item[$\chi$]: A real number from the interval $( 0, 1)$.  (The
\textbf{score threshhold}.)
\end{enumerate} 

\textbf{Protocol $R_{gen}$:}

\begin{enumerate}[itemsep=-3pt,topsep=3pt]
\item Let $c$ denote a real variable which
we initially set to $0$.

\item Choose a bit $t \in \{ 0, 1 \}$ according
to the distribution $(1-q,q )$.  

\item If $t = 1$ (``game round''), then the game
$G$ is played with $D$ and the output is recorded.
The score achieved is added to the variable $c$.

\item If $t = 0$ (``generation round'') then $\overline{a}$ is given to 
$D$ and the output is recorded.

\item Steps 2--4 are repeated $(N-1)$ more times.

\item If $c  <  \chi q N$, 
then the protocol \textbf{aborts}.  Otherwise, it \textbf{succeeds}.
\end{enumerate}
}}

\caption{Protocol $R_{gen}$ (modified from \cite{CoudronVY:2013})}
\label{roverlineprot}
\end{center}
\end{figure}

\newpage

\tableofcontents

\section{An overview of proof techniques}

Our proof begins in the same setting as our previous paper \cite{MS14v3}.
Let $\rho \colon
E \to E$ be a density matrix.  Then, one way to measure the randomness of $\rho$
is via the von Neumann entropy:
\begin{eqnarray}
H ( \rho ) & = & - \Tr [ \rho \log \rho ].
\end{eqnarray}
This quantity measures, asymptotically, the number of random bits that can
be extracted from $\rho^{\otimes m}$ as $m \to \infty$~\cite{R05,TomamichelCR:2009,Tomamichel:thesis}.  This is not
directly useful in the setting of untrusted-device cryptography, since there
is typically no reliable way to produce independent copies of a single state.
But consider instead the quantity
\begin{eqnarray}
H_{1+\epsilon} ( \rho ) & = & - \frac{1}{\epsilon} \log \Tr [ \rho^{1+\epsilon} ],
\end{eqnarray}
the Renyi entropy of the eigenvalues of $\rho$ (which tends to $H ( \rho )$ as $\epsilon \to 0$, although not uniformly).  The smooth
min-entropies of $\rho$ satisfy
\begin{eqnarray}
H_{min}^\epsilon ( \rho ) & \geq & H_{1+\epsilon} ( \rho ) - \frac{\log ( 1 / \delta ) }{\epsilon}.
\end{eqnarray}
Thus, function $H_{1+\epsilon} ( \rho )$ can be used to provide lower bounds
on the number of extractable bits in the cryptographic setting --- provided that
$\epsilon$ is not too small.

Thus, our goal becomes to prove in general that a device $D$ that scores high
at a nonlocal game $G$ must exhibit a randomness in its output, as measured
by the ``$(1+\epsilon)$-randomness''
function $X \mapsto - \frac{1}{\epsilon} \log \Tr ( X^{1+\epsilon} )$.  However there is
an apparent difficulty in proving such a direct relationship: the score of a device
is defined in terms of the function $X \mapsto \Tr ( X )$, which
cannot be uniformly determined from $X \mapsto
\Tr ( X^{1+\epsilon} )$.  We therefore introduce
in this paper the notion of the ``$(1+\epsilon)$-score''
of the device $D$ at the game $G$ (see Definition~\ref{schscoredef}), which is
a quantity
defined in terms of $\Tr ( X^{1+\epsilon} )$ which tends to the ordinary expected
score as $\epsilon \to 0$.  

We prove in section~\ref{rcsec} that the function $\pi_G$ in 
\ref{quotedrc} provides a universal lower bound on the $(1+\epsilon)$-randomness
of a device $D$, on input $\overline{a}$, in terms of the $(1+\epsilon)$-score
at the game $G$.  (Thus, $\pi_G$ is a ``rate curve'' for $G$.)  For this
we use the following uncertainty principle, which is the main new technical contribution
in this paper (see section~\ref{randschattensec}):
Let $\left\| \cdot \right\|_{1+\epsilon}$ denote the
$(1+\epsilon)$-Schatten matrix norm, which is defined by
\begin{eqnarray}
\left\| Z \right\|_{1+\epsilon}
= [\Tr ( (Z^* Z)^{\frac{1+\epsilon}{2}} )]^{\frac{1}{1+\epsilon}}.
\end{eqnarray}
\begin{proposition}
\label{binuncertaintypropquote}
For any finite dimensional Hilbert space $V$, any positive semidefinite
operator $\tau \colon V \to V$ satisfying $\left\| \tau \right\|_{1+\epsilon} = 1$,
and any binary projective measurement $\{ R_0, R_1 \}$ on $V$, the following
holds.  Let
\begin{eqnarray}
\tau' = R_0 \tau R_0 + R_1 \tau R_1.
\end{eqnarray}
Then
\begin{eqnarray}
\left\| \tau' \right\|_{1+\epsilon} & \leq &
1 - (\epsilon/2) \left\| \tau - \tau' \right\|^2_{1+\epsilon}.
\end{eqnarray}
\end{proposition}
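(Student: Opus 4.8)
The plan is to realize the pinching $\tau \mapsto \tau'$ as the average of $\tau$ with a single unitary conjugate of itself, and then to feed this into the Ball--Carlen--Lieb uniform convexity inequality, whose sharp constant is precisely tuned to produce the factor $\epsilon$.

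First I would introduce the operator $U = R_0 - R_1$. Since $\{ R_0, R_1 \}$ is a projective measurement we have $R_0^2 = R_0$, $R_1^2 = R_1$, $R_0 R_1 = R_1 R_0 = 0$, and $R_0 + R_1 = I$, from which $U = U^*$ and $U^2 = I$; thus $U$ is a self-adjoint unitary. Writing $\tau = (R_0 + R_1)\,\tau\,(R_0 + R_1)$ and expanding both $\tau$ and $U \tau U$ into their four terms $R_i \tau R_j$, the off-diagonal terms cancel in the sum while the diagonal terms cancel in the difference, giving the identities
\[
\tau' \;=\; \frac{\tau + U \tau U}{2}, \qquad \tau - \tau' \;=\; \frac{\tau - U \tau U}{2}.
\]
So $\tau'$ is the midpoint of the two unit-norm operators $\tau$ and $U \tau U$, and $\tau - \tau'$ is their half-difference.

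Next I would invoke unitary invariance of the Schatten norm to get $\left\| U \tau U \right\|_{1+\epsilon} = \left\| \tau \right\|_{1+\epsilon} = 1$. Setting $p = 1 + \epsilon$ (which lies in $(1, 2]$, exactly the range where the inequality below holds with this constant), the Ball--Carlen--Lieb $2$-uniform convexity inequality states that for all operators $A, B$,
\[
\left\| \frac{A+B}{2} \right\|_p^2 + (p-1) \left\| \frac{A-B}{2} \right\|_p^2 \;\le\; \frac{\|A\|_p^2 + \|B\|_p^2}{2}.
\]
Applying this with $A = \tau$ and $B = U \tau U$, substituting the two identities above, and using $\|A\|_p = \|B\|_p = 1$ together with $p - 1 = \epsilon$, yields the quadratic bound
\[
\left\| \tau' \right\|_{1+\epsilon}^2 + \epsilon \, \left\| \tau - \tau' \right\|_{1+\epsilon}^2 \;\le\; 1.
\]

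Finally I would convert this into the stated linear estimate. The bound above gives both $\left\| \tau' \right\|_{1+\epsilon}^2 \le 1 - \epsilon \left\| \tau - \tau' \right\|_{1+\epsilon}^2$ and, since the left-hand side is nonnegative, $\epsilon \left\| \tau - \tau' \right\|_{1+\epsilon}^2 \le 1$; hence taking square roots and applying the elementary inequality $\sqrt{1 - s} \le 1 - s/2$ for $s \in [0,1]$ produces exactly $\left\| \tau' \right\|_{1+\epsilon} \le 1 - (\epsilon/2) \left\| \tau - \tau' \right\|_{1+\epsilon}^2$. The substantive content here is entirely carried by the Ball--Carlen--Lieb inequality, which I would cite rather than reprove; within the present argument the only step requiring genuine insight is the recognition that $\tau'$ is a two-point unitary average, after which the uncertainty principle drops out immediately. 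The one point I would verify carefully is the admissible range: the sharp constant $(p-1)$ is available only for $1 \le p \le 2$, i.e. $\epsilon \in (0,1]$, which is precisely the regime of interest for the $(1+\epsilon)$-Renyi quantities used throughout the paper.
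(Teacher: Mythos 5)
Your proof is correct and is essentially the same argument as the paper's: the paper also writes the pinching as the midpoint of $\tau$ and a sign-flipped copy (its block operator $Z$ is exactly your $U\tau U$ with $U = R_0 - R_1$), feeds this pair of unit-norm operators into Theorem~1 of Ball--Carlen--Lieb, and finishes with $\sqrt{1-x} \leq 1 - x/2$. The only difference is presentational, since the paper routes the argument through an intermediate midpoint-convexity proposition stated in a chosen block basis rather than invoking the unitary $U$ intrinsically.
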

This principle can be understood as follows: if the measurement
$\{ R_0, R_1 \}$ significantly disturbs the state $\tau$
(as measured by the Schatten norm) then
the amount of randomness in the post-measurement
state is significantly more than that of $\tau$.
The proof of this result is a based on a known result \cite{BallC:1994}
on the uniform convexity of $\left\|  \cdot \right\|_{1+\epsilon}$.
We also prove a similar result for non-binary measurements.

The proof of the rate curve then uses contrapositive reasoning:
if input $\overline{a}$ to device $D$ does not produce
much randomness, then the initial state of $D$ must have
already been close (under $\left\| \cdot \right\|_{1+\epsilon}$) to that of a device that gave predictable outputs on input $\overline{a}$, and therefore its $(1+\epsilon)$-score at $G$ cannot be much larger than $W_{G, \overline{a}}$.
With the rate curve established, we prove the final security
result (Corollary~\ref{finalcor}) in section~\ref{protsec} via
a simplified version of the methods used in \cite{MS14v3}.

Our 
proof in this paper is largely self-contained --- the two primary outside results
that we rely on are the uniform convexity result, and a theorem on the relationship
between Renyi entropies and smooth min-entropy (Theorem~\ref{translatetominthm}).

\section{Definitions and notation}

\label{defsec}

\subsection{Device models}
\label{devmodelsubsec}

For our purposes, a \textit{quantum device component} is an
object containing a quantum system $Q$ which
receives a classical input ($a \in \mathcal{A}$) at each use, performs a corresponding
operation on an internal quantum system, and then produces a
classical output ($x \in \mathcal{X}$).  This process is
repeated a finite number of times.
It is assumed that the device can maintain
(quantum) memory in between rounds and may be entangled with
other devices and external quantum systems.  We can assume without loss
of generality that the quantum operations performed on $Q$
by such a component at each round are the same
(since a device that uses different operations depending
on previous inputs and outputs can always be simulated by one that
maintains a transcript in memory).  Also, using the Stinespring representation theorem, we can assume that the quantum operations
performed consist of an $\mathcal{X}$-valued projective measurement
on $Q$ (which may depend on the input $a$) followed by a unitary automorphism of $Q$.

\begin{definition}
\label{gendevdef}
A quantum device with input alphabet $\mathcal{A}$
and output alphabet $\mathcal{X}$ consists of the following data.
\begin{enumerate}
\item A finite dimensional Hilbert space $Q$ and a density
operator $\phi \colon Q \to Q$.

\item For each $a \in \mathcal{A}$, a projective measurement
$\{ P_a^x \}_{x \in \mathcal{X}}$ on $Q$ and a unitary
automorphism $U_a \colon Q \to Q$.
\end{enumerate}
\end{definition}

\begin{definition}
\label{componentdef}
A quantum device with $r$ components is a quantum device
satisfying the following additional conditions.
\begin{enumerate}
\item[3.] The space $Q$ has the form $Q = Q_1 \otimes \cdots \otimes
Q_r$ and the alphabets have the form $\mathcal{A} =
\mathcal{A}_1 \times \cdots \times \mathcal{A}_r$ and 
$\mathcal{X} =
\mathcal{X}_1 \times \cdots \times \mathcal{X}_r$.

\item[4.] The projective measurements $\{P_a^x \}_x$ have the form
\begin{eqnarray}
P_a^x & = & P_{a_1, 1}^x \otimes \cdots \otimes P_{a_r , r}^x
\end{eqnarray}
where $\{ P_{a_i,i}^x \}_{x \in \mathcal{X}}$ is a projective
measurement on $Q_i$ for each $a_i ,i$.
\end{enumerate}
\end{definition}

These devices operate by first performing the projective measurement
$\{ P_a^x \}$ on $Q$ and outputting the result, and then applying
the unitary automorphism $U_a$.
We note that there is no restriction placed on the
unitary automorphisms $U_a$ in Definition~\ref{componentdef}, which means that this model allows the components of the device $D$ to communicate with one another in between uses.

We also make the following definition.  For any finite set $S$, let $\Seq (S )$
denote the set of non-repeating sequences of elements from $S$.

\begin{definition}
\label{contextualdef}
A contextual device is a quantum device (Definition~\ref{gendevdef}) satisfying
the following additional conditions.
\begin{enumerate}
\item[3.] There are finite sets $\mathcal{B}, \mathcal{Y}$ such that
$\mathcal{A} \subseteq \Seq ( \mathcal{B} )$ and $\mathcal{X} =
\Seq ( \mathcal{Y} ) $.

\item[4.] There are projective measurements $\{ P_b^y \}_y$ 
for each $b \in \mathcal{B}$ such that
for any $a = (b_1, \ldots, b_m ) \in \mathcal{A}$, the measurements $\{ \{ P_b^y \}_y
\mid b \in \{ b_1, \ldots, b_m \} \}$ are simultaneously diagonalizable,
and 
\begin{eqnarray}
P_a^x & = & P_{b_1}^{y_1} P_{b_2}^{y_2} \cdots P_{b_m}^{y_m}
\end{eqnarray}
for any $m$-length sequence $x = (y_1, \ldots, y_m ) \in \mathcal{X}$.
\end{enumerate}
\end{definition}

The following definition is convenient for some of the proofs that will follow.
\begin{definition}
An \textbf{abstract} quantum device is defined as
in Definition~\ref{gendevdef}, but with the assumption
that $\phi$ is a density operator replaced by the weaker
assumption that $\phi$ is a nonzero positive semidefinite operator.
\end{definition}

We will use the following notation: if $D$ is a device with initial state
$\phi \colon Q \to Q$, and $X$ is a positive semidefinite operator
on $Q$, then
\begin{eqnarray}
\phi_X & := & \sqrt{X} \phi \sqrt{X}.
\end{eqnarray}
and
\begin{eqnarray}
\rho_X & := & \left( \sqrt{\phi} X \sqrt{\phi} \right)^\top.
\end{eqnarray}
The intuitions for these operators are as follows: if $\{ M , 
\mathbb{I} - M \}$ is a binary measurement on $Q$, then $\phi_M$
represents the post-measurement state of $Q$ for outcome $M$, and $\rho_M$ represents the corresponding post-measurement
state of a purifying system $\overline{Q}$ for $Q$.  

The operators $\rho_X$ are crucial objects of study for establishing
full security of a protocol involving the device $D$.  The purifying
system $\overline{Q}$ represents the maximal amount
of quantum information that an adversary could posses about the
device $D$.  We will refer to the operators $\rho_X$ as the \textbf{adversary states} of $D$, and to the operators
$\phi_X$ as the \textbf{device states} of $D$.  Note that $\rho_X$ has the same
singular values as $\phi_X$.

If an indexed sequence $(z_1, z_2, \ldots )$ is given, we use the boldface
variable $\mathbf{z}$ to denote the entire sequence.
For any sequences $(a_1, \ldots, a_n ) \in \mathcal{A}^n$
and $(x_1, \ldots, x_n ) \in \mathcal{X}^n$, let
\begin{eqnarray}
\phi_\mathbf{a}^\mathbf{x} & = & M_n M_{n-1} \cdots M_1 \phi M^*_1 \cdots M^*_{n-1} M^*_n, \\ \nonumber \\
\rho_\mathbf{a}^\mathbf{x} & = & \left( \sqrt{\phi} M_1^* M_2^* \cdots M_n^* M_n
\cdots M_2 M_1 \sqrt{\phi} \right)^\top,
\end{eqnarray}
where
\begin{eqnarray}
M_j & = & U_{a_j} P_{a_j}^{x_j}.
\end{eqnarray}
These represent the device states and adversary states occuring
for the input and output sequences $\mathbf{a}, \mathbf{x}$.

\subsection{Games}

We will state a general definition of a game.  In this paper, we will frequently
make use of a fixed input ($\overline{a}$) for the game, and so for 
convenience we make that choice of input part of the definition.

\begin{definition}
\label{gamedef}
A \textbf{game} $G$ consists of the following
data.
\begin{enumerate}
\item A finite set $\mathcal{A}$ (the \textbf{input alphabet}) with
a distinguished element $\overline{a} \in \mathcal{A}$
and a probability distribution $p \colon \mathcal{A} \to [0, 1]$.

\item A finite set $\mathcal{X}$ (the \textbf{output alphabet}).

\item A \textbf{scoring function} $H \colon \mathcal{A} \times \mathcal{X}
\to [0, 1]$.
\end{enumerate}
\end{definition}

The game operates as follows: an input $a$
is chosen according to the probability distribution $p$, and it is given to a device,
which produces an output $x$.  The function $H$
is applied to $(a, x )$ to obtain the score.  

The following is a notational convenience: if $\mathbf{a} \in \mathcal{A}^n$
and $\mathbf{x} \in \mathcal{X}^n$, then let
\begin{eqnarray}
p ( \mathbf{a} ) & = & p ( a_1) p(a_2) \cdots p ( a_n ) \\
H ( \mathbf{a} , \mathbf{x} ) & = & H ( a_1, x_1 ) + \ldots + H ( a_n , x_n ).
\end{eqnarray}

\begin{definition}
An \textbf{unbounded} game is defined as in 
Definition~\ref{gamedef}, except that the function $H$ maps to
$[0, \infty)$.
\end{definition}

\begin{definition}
\label{ngdef}
A \textbf{nonlocal game} $G$ with $s$ players
is a game in which the input and output alphabets
have the form $\mathcal{A} = \mathcal{A}_1 \times \cdots
\times \mathcal{A}_s$ and $\mathcal{X} = \mathcal{X}_1
\times \cdots \times \mathcal{X}_s$.
\end{definition}

\begin{definition}
A \textbf{contextual game} $G$ is
a game in which the input and output alphabets
satisfy $\mathcal{A} \subseteq \Seq ( \mathcal{B} )$
and $\mathcal{X} = \Seq ( \mathcal{Y} )$ for some
finite sets $\mathcal{B}, \mathcal{Y}$.
\end{definition}

A quantum device is \textbf{compatible} with a game if it
has the same descriptor (nonlocal or contextual) 
and the input
alphabet and output alphabet match those of the game (including
their decompositions into Cartesian products or
sequence-sets, as appropriate).

We can combine the above definitions as follows: if $G$ is a game
and $D$ is a compatible device, then the expected score of $D$ for $G$ (on the first use) is given by
\begin{eqnarray}
\sum_{a \in \mathcal{A}, x \in \mathcal{X}} 
p(a) H (a, x ) \Tr ( P_a^x \phi ).
\end{eqnarray}

\begin{definition}
Let $G$ be a  game.  Then, the \textbf{quantum value}
of $G$, denoted $W_G$, is the supremum of the expected scores for $G$ taken
over all devices compatible with $G$. 
\end{definition}

In order for a device to be useful for our purposes, it must generate
a score at some nonlocal game which guarantees quantum
behavior on a \textit{fixed} input string (as in \cite{CoudronVY:2013}).  This
motivates the following definitions.

\begin{definition}
Let $D$ be an abstract quantum device and let $\overline{a} \in \mathcal{A}$ be
a fixed input letter.  Then, we say that $D$ is \textbf{deterministic}
on input $\overline{a}$ if
\begin{eqnarray}
\phi &  = & P_{\overline{a}}^{\overline{x}} \phi P_{\overline{a}}^{\overline{x}} 
\end{eqnarray}
for some output letter $\overline{x}$.  We say that $D$ is
\textbf{classically predictable} on input $\overline{a}$ if 
\begin{eqnarray}
\phi & = & \sum_{x \in \mathcal{X}} P_{\overline{a}}^x \phi P_{\overline{a}}^x.
\end{eqnarray}
\end{definition}

\begin{definition}
If $G$ is a  game, then let $W_{G, \overline{a}}$ denote
the supremum of the expected scores for $G$ over all
compatible devices $D$ that have classically
predictable outputs on input $\overline{a}$.
\end{definition}

\subsection{Device-independent vanishing error functions}

It is necessary to be cautious about our use of asymptotic notation because
ultimately we will want to assert that the bounds we prove on the
rate of our protocol (including second-order terms) are truly
device-independent.
We adopt the following conventions for asymptotic notation: If we say,

\vskip0.1in

``For all $x,y \in (0, 1]$, $F ( x, y ) \leq O ( y )$.''

\vskip0.1in

Then the coefficient and the range in the big-$O$ expression
must be independent of $x$.  (The sentence above asserts that there
is a function $G(y )$ satisfying $\lim_{y \to 0 } G ( y )/y < \infty$ such that $F(x, y ) \leq G ( y )$.)
On the other hand, if we say

\vskip0.1in

``Let $x \in (0, 1]$ be a real number.  Then, for all $y \in (0, 1]$, $F ( x, y ) \leq O ( y )$.''

\vskip0.1in

Then it is understood that the coefficient and the range in the big-$O$
expression may depend on $x$.  (The sentences above assert that $\lim_{y \to 0}
F ( x, y)/y < \infty$ for all $x \in (0, 1 ]$.)

\vskip0.1in

Equivalently, we may write

\vskip0.1in

``For all $x, y \in (0, 1]$, $F(x, y ) \leq O_x ( y )$.''

\vskip0.1in

Here the subscripted $x$ indicates that dependence on $x$ is allowed. 

\section{The functions $\left\| \cdot \right\|_{1+\epsilon}$ and $\left<  \cdot \right>_{1+\epsilon}$}

\label{funcsec}

We continue with preliminaries in this section.
For any linear operator $Z$ and any $\ell \geq 1$, the $\ell$-Schatten norm is given by
\begin{eqnarray}
\left\| Z \right\|_\ell & = & \Tr \left[  (Z^* Z)^{\frac{\ell}{2}} \right]^{\frac{1}{\ell}}.
\end{eqnarray}
(If $W$ is positive semidefinite, then the $\ell$-Schatten norm
can be written more simply as
$\left< W \right>_\ell = \Tr \left[ W^\ell \right]^{\frac{1}{\ell}}$.)
For convenience, we will also define the notation $\left< \cdot \right>_\ell$ to mean the
following
simpler expression:
\begin{eqnarray}
\left< Z \right>_\ell & = & \Tr \left[  (Z^* Z)^{\frac{\ell}{2}} \right].
\end{eqnarray}
We have $\left< Z \right>_\ell = \left\| Z \right\|_\ell^\ell$.
We will often be concerned with the functions $\left< W \right>_{1+\epsilon}$
and $\left\| W \right\|_{1+\epsilon}$ for small $\epsilon$.

A discussion of some relevant properties of the Schatten norm
can be found in \cite{Rastegin:2012}.  For our purposes, we will need
the following properties of $\left< \cdot \right>_{1+\epsilon}$ and
$\left\| \cdot \right\|_{1+\epsilon}$.
The functions are almost linear in the following
sense: for positive semidefinite operators $X,Y$,
\begin{eqnarray}
\label{eprel1}
& (1 - O ( \epsilon )) \left( \left\| X \right\|_{1+\epsilon} +
\left\| Y \right\|_{1+\epsilon} \right) \leq \left\| X + Y \right\|_{1+\epsilon}
\leq \left\| X \right\|_{1+\epsilon} +
\left\| Y \right\|_{1+\epsilon} \\ \nonumber \\
\label{eprel2}
& \left< X \right>_{1+\epsilon} + \left< Y  \right>_{1+\epsilon}
\leq \left< X + Y \right>_{1+\epsilon} \leq
(1+ O ( \epsilon) ) \left( \left< X \right>_{1+\epsilon}
+ \left< Y  \right>_{1+\epsilon} \right).
\end{eqnarray}
Also, the righthand inequalities in (\ref{eprel1}) and (\ref{eprel2}) hold also
when $X$ and $Y$ are replaced by arbitrary linear operators (not necessarily
positive semidefinite).

When $A$ is a positive semidefinite operator on a space $V = V_1 \oplus \ldots \oplus V_m$, and $A' = \sum_k P_k A P_k$, where $P_k$ denotes projection on $V_k$, then
\begin{eqnarray}
\label{measuresch1}
(1 - O_m (\epsilon ) ) \left\| A \right\|_{1+\epsilon} \leq 
\left\| A' \right\|_{1+\epsilon} \leq \left\| A \right\|_{1+\epsilon} \\
\label{measuresch2}
(1 - O_m (\epsilon ) ) \left< A \right>_{1+\epsilon} \leq 
\left< A' \right>_{1+\epsilon} \leq \left< A \right>_{1+\epsilon},
\end{eqnarray}
and the righthand inequalities both hold also when $A$ is replaced by an arbitrary
linear operator.

Unless otherwise specified, the domain of the variable $\epsilon$
will always be $(0, 1]$.

\subsection{Relationship to extractable bits}

\label{smminsubsec}

Our motivation for studying the function $\left< \cdot \right>_{1+\epsilon}$
is its relationship to quantum smooth min-entropy.
The smooth min-entropy of a classical-quantum $CQ$ state measures (asymptotically)
the number of bits that can be extracted from $C$ in the presence
of an adversary who possesses $Q$ \cite{R05}.  
\begin{definition}
Let $CQ$ be a classical-quantum system whose state is $\Gamma_{CQ}$.  Then,
the min-entropy of $C$ conditioned on $Q$ is
\begin{eqnarray}
H_{min} \left( C \mid Q \right) & = & 
\max_{\mathbb{I}_C \otimes \sigma \geq \alpha } 
\left[ - \log \Tr (\sigma ) \right],
\end{eqnarray}
where $\sigma$ varies over all positive semidefinite operators on $Q$ that satisfy
the given inequality. For any
$\delta > 0$, the min-entropy of $C$ conditioned on $Q$ with smoothing parameter
$\delta$ is
\begin{eqnarray}
H^\delta_{min} \left( C \mid Q \right) & = & 
\max_{\left\| \Gamma' - \Gamma_{CQ} \right\|_1 \leq \delta } 
H_{min} \left( C \mid Q \right)_{\Gamma'},
\end{eqnarray}
where $\Gamma'$ varies over all classical-quantum positive semidefinite
operators on $CQ$ that satisfy the given inequality.
\end{definition}

In this paper we will implicitly use the
quantum Renyi divergence functions developed by \cite{JaksicOPP:2011, MullerDSFT:2013,
WildeWY:2013}, and surveyed recently by \cite{Tomamichel:survey}. In order to conserve space, we will not introduce a full
formalism for these functions here, but will just note the following intuition:
Let $CQ$ be a classical quantum system whose state
is given by $\sum_c \left| c \right> \left< c \right| \otimes \alpha_c $, and let
$\alpha = \sum_c \alpha_c$.  Then, the expression
\begin{eqnarray}
- \frac{1}{\epsilon} \log \left( \sum_c \left< \alpha_c \right>_{1+\epsilon} \right)
\end{eqnarray}
can be thought of as an absolute measure of the amount of randomness
contained in $CQ$, while the related expression
\begin{eqnarray}
- \frac{1}{\epsilon} \log \left( \sum_c \left< \alpha^{\frac{- \epsilon}{2+2\epsilon} }\alpha_c \alpha^{\frac{- \epsilon}{2+2\epsilon} }
\right>_{1+\epsilon} \right)
\end{eqnarray}
can be thought of as a measure of the amount of randomness in $C$ conditioned
on $Q$.\footnote{The latter quantity is, in formal terms, the negation of the Renyi divergence of the state of $CQ$ relative to the state $\mathbb{I}_C \otimes \alpha$.}  This intuition
is supported by the following theorem about smooth min-entropy.
\begin{theorem}
\label{translatetominthm}
Let $\Lambda$ be a subnormalized operator on a bipartite system $CQ$ of the form $\Lambda = \sum_c \alpha_c \otimes \left| c \right> \left< c \right|$, and 
let $\sigma$ be a density matrix on $Q$.  Let 
\begin{eqnarray}
\label{expabove}
K &= & - \frac{1}{\epsilon} \log \left( \sum_c \left< \sigma^{\frac{- \epsilon}{2+2\epsilon} }\alpha_c \sigma^{\frac{- \epsilon}{2+2\epsilon} }
\right>_{1+\epsilon} \right)
\end{eqnarray}
Then, for any $\delta > 0$,
\begin{eqnarray}
\label{thethm}
H_{min}^\delta \left( C \mid Q \right)_\Lambda \geq K - \frac{1 + 2 \log ( 1 / \delta )}{\epsilon}.
\end{eqnarray}
\end{theorem}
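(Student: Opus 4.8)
The plan is to recognize $K$ as a conditional sandwiched Rényi entropy and then obtain the min-entropy bound by an explicit smoothing construction. Writing $\sigma_{CQ} = \mathbb{I}_C \otimes \sigma$ and $\beta = \epsilon/(2+2\epsilon)$, the block-diagonal structure of $\Lambda$ gives
\[
\sum_c \left< \sigma^{-\beta} \alpha_c \sigma^{-\beta} \right>_{1+\epsilon}
= \Tr \left[ \left( \sigma_{CQ}^{-\beta} \Lambda \, \sigma_{CQ}^{-\beta} \right)^{1+\epsilon} \right],
\]
so that $K = -\tfrac{1}{\epsilon} \log \Tr [ ( \sigma_{CQ}^{-\beta} \Lambda \, \sigma_{CQ}^{-\beta} )^{1+\epsilon} ]$ is exactly the negation of the order-$(1+\epsilon)$ sandwiched Rényi divergence of $\Lambda$ relative to $\sigma_{CQ}$. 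This identifies $K$ with the quantity studied in \cite{MullerDSFT:2013, WildeWY:2013, Tomamichel:survey}, and reduces the theorem to a bound relating this divergence to the smoothed max-divergence. Concretely, by the definition of $H_{min}$ it suffices to produce a subnormalized $\Lambda'$ with $\left\| \Lambda' - \Lambda \right\|_1 \leq \delta$, a real number $\mu$, and a positive semidefinite $\tau$ on $Q$ with $\mathbb{I}_C \otimes \tau \geq \Lambda'$ and $\Tr \tau \leq 2^{-K + (1 + 2\log(1/\delta))/\epsilon}$; taking $\tau = 2^\mu \sigma$ reduces this to finding $\Lambda' \leq 2^\mu \sigma_{CQ}$ within trace distance $\delta$, i.e. to controlling the smoothed max-divergence $D_{max}^\delta ( \Lambda \,\|\, \sigma_{CQ} )$.

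For the smoothing I would use spectral thresholding of the $\beta$-sandwiched operator. Set $G = \sigma_{CQ}^{-\beta} \Lambda \, \sigma_{CQ}^{-\beta}$, let $\Pi$ be its spectral projection onto eigenvalues at most $\gamma$, and define $\Lambda' = \sigma_{CQ}^{\beta} \, \Pi G \Pi \, \sigma_{CQ}^{\beta}$. Since $\Pi$ commutes with $G$, the discarded part is positive semidefinite and equals $\sigma_{CQ}^{\beta} G (\mathbb{I} - \Pi) \sigma_{CQ}^{\beta}$. Using the operator inequality $G (\mathbb{I} - \Pi) \leq \gamma^{-\epsilon} G^{1+\epsilon} (\mathbb{I} - \Pi)$ (valid on the range of $\mathbb{I} - \Pi$, where $G > \gamma$), conjugation-monotonicity under $\sigma_{CQ}^\beta$, and the bound $\| \sigma^{2\beta} \|_\infty \leq 1$ (as $\sigma \leq \mathbb{I}$ and $2\beta \in (0,1)$), one gets
\[
\left\| \Lambda - \Lambda' \right\|_1 \;\leq\; \gamma^{-\epsilon} \, \Tr [ \sigma_{CQ}^{2\beta} G^{1+\epsilon} ] \;\leq\; \gamma^{-\epsilon} \, \Tr [ G^{1+\epsilon} ] \;=\; \gamma^{-\epsilon} 2^{-\epsilon K}.
\]
Choosing $\gamma$ of order $( 2^{-\epsilon K}/\delta )^{1/\epsilon}$ makes this at most $\delta$ and fixes the $\log(1/\delta)/\epsilon$ contribution to the final bound.

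The hard part is the max-divergence side, and I expect this to be the main obstacle. The thresholded $\Lambda'$ above is controlled only by $\Lambda' \leq \gamma \, \sigma_{CQ}^{2\beta}$, whose trace is dimension-dependent, so it does \emph{not} directly yield a bound of the form $\Lambda' \leq 2^\mu \sigma_{CQ}$. Conversely, thresholding the \emph{symmetric} operator $\sigma_{CQ}^{-1/2} \Lambda \, \sigma_{CQ}^{-1/2}$ gives the clean bound $\Lambda' \leq 2^\mu \sigma_{CQ}$ automatically, but then the smoothing error is governed by $\sum_c \Tr [ ( \sigma^{-1/2} \alpha_c \sigma^{-1/2} )^{1+\epsilon} \sigma ]$, which by the Araki--Lieb--Thirring inequality is an \emph{upper} bound for $2^{-\epsilon K}$ rather than an upper bound of the right size (the two coincide exactly when each $\alpha_c$ commutes with $\sigma$, since $\tfrac12 = \beta + \tfrac{1}{2(1+\epsilon)}$). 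Thus neither naive thresholding alone closes the gap: the difference between the sandwiching exponents $\beta$ and $\tfrac12$ is the genuine analytic core. I would resolve it exactly as in the proofs relating sandwiched Rényi divergences to the smooth max-divergence in \cite{MullerDSFT:2013, WildeWY:2013, Tomamichel:survey, Tomamichel:thesis}, via a complex-interpolation (Hadamard three-line) estimate that bridges the $\ell_\infty$ bound at one edge with the Rényi bound at the other; this is precisely the known inequality $D_{max}^\delta ( \rho \,\|\, S ) \leq D_{1+\epsilon} ( \rho \,\|\, S ) + \tfrac{1 + 2\log(1/\delta)}{\epsilon}$. Combining it with the reduction of the first paragraph gives $H_{min}^\delta ( C \mid Q )_\Lambda \geq K - \tfrac{1 + 2\log(1/\delta)}{\epsilon}$, with the $2\log(1/\delta)/\epsilon$ term coming from smoothing and the $1/\epsilon$ term being the normalization slack produced by the interpolation.
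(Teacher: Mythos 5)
You have the statement right and your reduction is sound, but your route is genuinely different from --- and much more self-contained than --- the paper's, whose entire proof is a citation: Corollary D.8 of \cite{MS14v3} (building on \cite{TomamichelCR:2009, DupuisFS:2013}) gives the bound under the extra assumption $\Tr ( \Lambda ) = 1$, and the subnormalized case is handled by rescaling together with the fact that $1 + \epsilon \leq 2$. You instead unpack the statement into the standard divergence formalism: you identify $K$ as the negated sandwiched R\'enyi divergence of $\Lambda$ relative to $\mathbb{I}_C \otimes \sigma$ (the paper notes this only in a footnote), reduce the min-entropy claim to the known smooth max-divergence estimate $D_{max}^{\delta} ( \Lambda \, \| \, \mathbb{I}_C \otimes \sigma ) \leq \widetilde{D}_{1+\epsilon} ( \Lambda \, \| \, \mathbb{I}_C \otimes \sigma ) + (1 + 2 \log (1/\delta))/\epsilon$, and cite \cite{MullerDSFT:2013, WildeWY:2013, Tomamichel:survey} for that estimate. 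Your diagnosis of why naive spectral thresholding cannot close the gap --- the mismatch between the sandwiching exponent $\epsilon/(2+2\epsilon)$ and $1/2$, with Araki--Lieb--Thirring pointing the wrong way --- is accurate and isolates exactly the analytic core that the cited results supply. The paper's route buys brevity and exact constants; yours buys transparency about what is actually being invoked, at the cost of re-deriving the reduction.

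Two loose ends remain before your argument matches the theorem as stated. First, the literature inequality is for normalized states, while the theorem allows $\Tr ( \Lambda ) \leq 1$: you still need the paper's rescaling step (write $\Lambda = t \hat{\Lambda}$ with $\Tr ( \hat{\Lambda} ) = 1$, note that the admissible smoothing radius for $\hat{\Lambda}$ improves to $\delta / t$, and check that the net correction $\log (1/t) / \epsilon$ enters with a favorable sign). Second, the paper's $H_{min}^{\delta}$ smooths in trace norm over classical-quantum operators, whereas the standard statements smooth in purified distance over arbitrary subnormalized states; bridging this requires a pinching in the classical basis (harmless here, since $\mathbb{I}_C \otimes \sigma$ is pinching-invariant, so neither the distance nor the max-divergence degrades) and some care with constants, since purified distance dominates trace distance. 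Neither is a serious obstacle, but both must be handled for the precise constant $(1 + 2 \log (1/\delta))/\epsilon$ to come out. A minor aside: the core inequality is usually proved via hypothesis-testing divergence or a direct spectral construction rather than Hadamard three-line interpolation, but since you invoke only its statement, this does not affect your argument.
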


\begin{proof}
Corollary D.8 in \cite{MS14v3} proves the above (building on
\cite{TomamichelCR:2009, DupuisFS:2013}) with the extra
assumption that $\Tr ( \Lambda ) = 1$.  The case $\Tr ( \Lambda ) \leq 1$
follows by rescaling and using the fact that $1+\epsilon \leq 2$.
\end{proof}

(For a more detailed discussion of
the relationship between smooth min-entropy
and Renyi divergence, see section~$6.4.1$ of
\cite{Tomamichel:survey}, which uses a different definition of 
$H^\delta_{min} \left( \cdot \mid \cdot \right)$.)

\section{Randomness versus state disturbance}

\label{randschattensec}

Our central goal is to prove the randomness of
certain classical variables in the presence of quantum side information,
using the Schatten norm as a metric.   This section provides
inductive steps for such proofs of randomness.

The basis for all of the results in this section is a known
result on the \textit{uniform convexity} of the Schatten norm, Theorem 1 of \cite{BallC:1994}.  We state the following
proposition, which is a special case of uniform convexity.

\begin{proposition}
\label{curvatureprop}
For any $\epsilon \in (0,1]$, and any linear operators
$W$ and $Z$ such that $\left\| W \right\|_{1+\epsilon} = \left\| Z
\right\|_{1+\epsilon} = 1$,
\begin{eqnarray}
\left\| \frac{W + Z}{2} \right\|_{1 + \epsilon} \leq
1 - \frac{\epsilon}{8} \left\| W - Z \right\|_{1+\epsilon}^2
. \qed
\end{eqnarray}
\end{proposition}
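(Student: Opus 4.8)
The plan is to derive this directly from the Ball--Carlen--Lieb uniform convexity inequality cited as Theorem 1 of \cite{BallC:1994}, doing essentially no work beyond a change of variables and an elementary estimate on the square root. Recall that for $1 \leq p \leq 2$ the Schatten $p$-norm is $2$-uniformly convex with optimal constant $(p-1)$, i.e. for all linear operators $A, B$,
\begin{eqnarray}
\left\| \frac{A+B}{2} \right\|_p^2 + (p-1) \left\| \frac{A-B}{2} \right\|_p^2 \leq \frac{ \left\| A \right\|_p^2 + \left\| B \right\|_p^2}{2}.
\end{eqnarray}
I would set $p = 1 + \epsilon$ (so $p - 1 = \epsilon$ and the hypothesis $\epsilon \in (0,1]$ guarantees $p \in (1,2]$, exactly the regime where this inequality holds), and substitute $A = W$, $B = Z$. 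Using $\left\| W \right\|_{1+\epsilon} = \left\| Z \right\|_{1+\epsilon} = 1$, the right-hand side equals $1$, yielding
\begin{eqnarray}
\left\| \frac{W + Z}{2} \right\|_{1+\epsilon}^2 \leq 1 - \epsilon \left\| \frac{W - Z}{2} \right\|_{1+\epsilon}^2 = 1 - \frac{\epsilon}{4} \left\| W - Z \right\|_{1+\epsilon}^2.
\end{eqnarray}

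The remaining task is to pass from this bound on the \emph{square} of the norm to the linear bound in the statement. Writing $u = \left\| (W+Z)/2 \right\|_{1+\epsilon}$ and $v = \left\| W - Z \right\|_{1+\epsilon}$, the above reads $u^2 \leq 1 - (\epsilon/4) v^2$, and I want $u \leq 1 - (\epsilon/8) v^2$. I would invoke the elementary inequality $\sqrt{1 - t} \leq 1 - t/2$, valid for $t \in [0,1]$. To apply it with $t = (\epsilon/4) v^2$ I must first check $t \leq 1$: by the triangle inequality $v = \left\| W - Z \right\|_{1+\epsilon} \leq \left\| W \right\|_{1+\epsilon} + \left\| Z \right\|_{1+\epsilon} = 2$, so $v^2 \leq 4$ and hence $t = (\epsilon/4) v^2 \leq \epsilon \leq 1$. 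Taking square roots then gives
\begin{eqnarray}
u \leq \sqrt{1 - \tfrac{\epsilon}{4} v^2} \leq 1 - \tfrac{1}{2} \cdot \tfrac{\epsilon}{4} v^2 = 1 - \tfrac{\epsilon}{8} v^2,
\end{eqnarray}
which is precisely the claimed bound.

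Since the genuinely deep content — the uniform convexity estimate with its sharp constant $(p-1)$ — is imported wholesale from \cite{BallC:1994}, there is no serious obstacle in this argument; it is a short deduction. The only points requiring any care are bookkeeping ones: confirming that the exponent range $\epsilon \in (0,1]$ lands inside the interval $1 \leq p \leq 2$ where $2$-uniform convexity holds, and verifying the side condition $t \leq 1$ needed to legitimately apply $\sqrt{1-t} \leq 1 - t/2$. The factor of $\tfrac{1}{8}$ in the conclusion (versus the $\tfrac{1}{4}$ that appears before taking the square root) is exactly the cost of this square-root step, and tracking that factor correctly is the main thing to get right.
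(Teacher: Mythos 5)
Your proof is correct and follows essentially the same route as the paper's: apply Theorem 1 of Ball--Carlen--Lieb with $p = 1+\epsilon$ (the paper substitutes $X = (W+Z)/2$, $Y = (W-Z)/2$ into the original form, which is equivalent to your restatement when $\left\| W \right\|_{1+\epsilon} = \left\| Z \right\|_{1+\epsilon} = 1$), then pass from the squared bound with constant $\epsilon/4$ to the linear bound with constant $\epsilon/8$ via $\sqrt{1-x} \leq 1 - x/2$. Your explicit verification of the side condition $(\epsilon/4)\left\| W - Z \right\|_{1+\epsilon}^2 \leq 1$ via the triangle inequality is a point of care the paper leaves implicit, but the argument is the same.
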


\begin{proof}
Substituting $X = (W+Z)/2$, $Y = (W-Z)/2$, and $p = 1 + \epsilon$
into Theorem~1
of \cite{BallC:1994}, we have
\begin{eqnarray}
1 & \geq & \left\| \frac{W+Z}{2} \right\|_{1+\epsilon}^2 +
\epsilon \left\| \frac{W - Z}{2} \right\|_{1+\epsilon}^2
\end{eqnarray}
which implies $\left\| (W+Z)/2 \right\|_{1+\epsilon}^2 \leq 1 - 
(\epsilon/4) \left\| W - Z \right\|_{1+\epsilon}^2$. 
The proof is completed by the fact that $\sqrt{1 - x} \leq 1 - (x/2)$
for any $x \in [0,1]$.  
\end{proof}

The next proposition compares the amount of randomness obtained
from a measurement
to the degree of disturbance
in the state that is caused by the measurement.

\begin{proposition}
\label{binuncertaintyprop}
For any finite dimensional Hilbert space $V$, any positive semidefinite
operator $\tau \colon V \to V$ satisfying $\left\| \tau \right\|_{1+\epsilon} = 1$,
and any binary projective measurement $\{ R_0, R_1 \}$ on $V$, the following
holds.  Let
\begin{eqnarray}
\tau' = R_0 \tau R_0 + R_1 \tau R_1.
\end{eqnarray}
Then
\begin{eqnarray}
\left\| \tau' \right\|_{1+\epsilon} & \leq &
1 - (\epsilon/2) \left\| \tau - \tau' \right\|^2_{1+\epsilon}.
\end{eqnarray}
\end{proposition}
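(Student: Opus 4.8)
The plan is to reduce the statement to the uniform convexity bound of Proposition~\ref{curvatureprop} by exhibiting $\tau'$ as the midpoint of two operators of unit $(1+\epsilon)$-norm. The natural candidates are $\tau$ itself and its image under the reflection associated with the measurement. Concretely, I would set $R = R_0 - R_1$. Since $\{R_0, R_1\}$ is a binary projective measurement, $R_0$ and $R_1$ are orthogonal projections with $R_0 + R_1 = \mathbb{I}$ and $R_0 R_1 = 0$, so $R$ is a self-adjoint unitary: $R^* = R$ and $R^2 = R_0 + R_1 = \mathbb{I}$.

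The key algebraic identity is $\tau' = (\tau + R\tau R)/2$. To see this, expand $\tau = (R_0 + R_1)\tau(R_0 + R_1)$ (valid because $R_0 + R_1 = \mathbb{I}$) and $R\tau R = (R_0 - R_1)\tau(R_0 - R_1)$; the cross terms $R_0 \tau R_1$ and $R_1 \tau R_0$ appear with opposite signs in the two expressions and cancel when these are added, leaving $\tau + R\tau R = 2(R_0\tau R_0 + R_1\tau R_1) = 2\tau'$.

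With this in hand, the rest is bookkeeping. Because $R$ is unitary and the Schatten norm is unitarily invariant, $\left\| R\tau R \right\|_{1+\epsilon} = \left\| \tau \right\|_{1+\epsilon} = 1$, so I may apply Proposition~\ref{curvatureprop} with $W = \tau$ and $Z = R\tau R$ to obtain $\left\| \tau' \right\|_{1+\epsilon} \leq 1 - (\epsilon/8)\left\| \tau - R\tau R \right\|^2_{1+\epsilon}$. Finally, the same identity gives $\tau - \tau' = (\tau - R\tau R)/2$, hence $\left\| \tau - R\tau R \right\|_{1+\epsilon} = 2\left\| \tau - \tau' \right\|_{1+\epsilon}$; substituting this converts the coefficient $\epsilon/8$ into $\epsilon/2$ and yields the claimed bound.

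The only real idea is the reflection trick --- recognizing that the ``which-outcome'' dephasing map $\tau \mapsto \tau'$ is exactly the average of the identity and conjugation by the self-adjoint unitary $R$. Once that is spotted, unitary invariance of the norm and the cancellation of the cross terms make everything else immediate, so I do not anticipate a genuine obstacle; the main point to verify carefully is that $R$ really is a self-adjoint unitary, which is precisely where the projectivity and completeness of $\{R_0, R_1\}$ are used.
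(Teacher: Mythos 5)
Your proof is correct and is essentially the paper's own argument: the paper chooses a basis adapted to $R_0$ and applies Proposition~\ref{curvatureprop} with $W = \tau$ and $Z$ equal to $\tau$ with its off-diagonal blocks negated, which is exactly your $R \tau R$ with $R = R_0 - R_1$. Your basis-free phrasing, which explicitly invokes unitary invariance to get $\left\| Z \right\|_{1+\epsilon} = 1$, is if anything slightly more careful than the paper's, which leaves that normalization implicit.
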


\begin{proof}
Choose a basis $\{ e_1, \ldots, e_n \}$ for $V$
such that $R_0$ is the projector into the space spanned
by $e_1, \ldots, e_m$, and write $\tau$ in $(m, n-m)$-block form:
\begin{eqnarray}
\label{writingtau}
\tau = \left[ \begin{array}{c|c} T_{00} & T_{01} \\
\hline
T_{10} & T_{11} \end{array} \right].
\end{eqnarray}
Applying Proposition~\ref{curvatureprop} with 
$W = \tau$ and
\begin{eqnarray}
Z = \left[ \begin{array}{c|c} T_{00} & - T_{01} \\
\hline
- T_{10} & T_{11} \end{array} \right],
\end{eqnarray}
we obtain
\begin{eqnarray}
\left\| \tau' \right\|_{1+\epsilon} 
& = & \left\| (W + Z)/2 \right\|_{1+\epsilon} \\
& \leq & 1 - \frac{\epsilon}{8} \left\| W - Z \right\|_{1+\epsilon}^2 \\ 
& = & 1 - \frac{\epsilon}{8} \left\| 
2 ( \tau - \tau') \right\|_{1+\epsilon}^2 \\
& = & 1 - \frac{\epsilon}{2} \left\| 
\tau - \tau' \right\|_{1+\epsilon}^2,
\end{eqnarray}
as desired.
\end{proof}

\begin{proposition}
\label{ternaryprop}
For any finite dimensional Hilbert space $V$, any positive semidefinite
operator $\tau \colon V \to V$ satisfying $\left\| \tau \right\|_{1+\epsilon} = 1$,
and any binary projective measurement $\{ R_0, R_1 , \ldots, R_n \}$ on $V$, the following
holds.  Let
$\tau' = \sum_i P_i \tau P_i$.  Then
\begin{eqnarray}
\left\|  \tau' \right\|_{1+\epsilon} & \leq &
1 - \frac{\epsilon}{2n} \left\| \tau - \tau' \right\|^2_{1+\epsilon} + O_n ( \epsilon^2).
\end{eqnarray}
\end{proposition}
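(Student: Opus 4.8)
The plan is to reduce the $n$-outcome case to the binary case (Proposition~\ref{binuncertaintyprop}) by decomposing the full pinching map $\tau \mapsto \tau'$ into a sequence of binary pinchings. Writing $\tau' = \sum_{i=0}^n P_i \tau P_i$, I would introduce the partial-pinching operators that, at stage $k$, only separate the block $P_k$ from its complement $Q_k := \mathbb{I} - (P_0 + \cdots + P_{k-1})$. Concretely, define $\tau^{(0)} = \tau$ and $\tau^{(k)} = P_k \tau^{(k-1)} P_k + Q_{k+1}' \tau^{(k-1)} Q_{k+1}'$, where each step is a binary projective measurement $\{P_k, \mathbb{I} - P_k\}$ applied to the residual block. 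After $n$ such binary steps we reach $\tau^{(n)} = \tau'$. Each binary step is governed by Proposition~\ref{binuncertaintyprop}, giving a per-step contraction controlled by $\left\| \tau^{(k-1)} - \tau^{(k)} \right\|_{1+\epsilon}^2$.

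The second ingredient is to combine the $n$ per-step bounds. From Proposition~\ref{binuncertaintyprop} applied to each stage (after renormalizing the intermediate states, which introduces only $O(\epsilon)$ corrections since $\left\| \tau^{(k)} \right\|_{1+\epsilon} = 1 - O(\epsilon)$ at every stage), I expect a bound of the form $\left\| \tau' \right\|_{1+\epsilon} \leq 1 - (\epsilon/2) \sum_{k=1}^n \left\| \tau^{(k-1)} - \tau^{(k)} \right\|_{1+\epsilon}^2 + O_n(\epsilon^2)$. The telescoping increments $\delta_k := \tau^{(k-1)} - \tau^{(k)}$ sum to $\tau - \tau'$, so I would invoke the power-mean (Cauchy--Schwarz) inequality $\sum_k \left\| \delta_k \right\|_{1+\epsilon}^2 \geq \frac{1}{n} \left( \sum_k \left\| \delta_k \right\|_{1+\epsilon} \right)^2 \geq \frac{1}{n} \left\| \sum_k \delta_k \right\|_{1+\epsilon}^2 = \frac{1}{n} \left\| \tau - \tau' \right\|_{1+\epsilon}^2$, where the middle step uses the triangle inequality for the Schatten norm. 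This yields exactly the claimed coefficient $\epsilon/(2n)$.

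The main obstacle I anticipate is bookkeeping the normalization drift. Proposition~\ref{binuncertaintyprop} requires the input state to have unit $(1+\epsilon)$-norm, but the intermediate states $\tau^{(k-1)}$ only satisfy $\left\| \tau^{(k-1)} \right\|_{1+\epsilon} = 1 - O_k(\epsilon)$. Rescaling each to unit norm before applying the binary bound multiplies the disturbance terms by factors of $1 + O(\epsilon)$, and since the contraction terms are themselves already order $\epsilon$, these corrections feed into the $O_n(\epsilon^2)$ error. I would need to verify that these accumulated rescaling errors, summed over the $n$ stages, stay within $O_n(\epsilon^2)$ — this is where the dimension-dependent constant in the error term genuinely enters, and it justifies the $O_n$ subscript in the statement. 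A secondary technical point is ensuring the triangle-inequality and almost-linearity estimates from (\ref{eprel1}) are applied with the correct direction; the increments $\delta_k$ are differences of positive semidefinite operators, not themselves positive, so I would rely on the fact noted after (\ref{eprel2}) that the right-hand (subadditivity) inequalities hold for arbitrary linear operators.
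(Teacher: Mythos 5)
Your proposal is correct and follows essentially the same route as the paper's proof: the paper likewise decomposes the pinching into partial pinchings $\tau_i = \sum_{j<i} P_j \tau P_j + (P_i + \cdots + P_n)\,\tau\,(P_i + \cdots + P_n)$, applies Proposition~\ref{binuncertaintyprop} to each renormalized intermediate state, and finishes with exactly your Cauchy--Schwarz and triangle-inequality step. The only cosmetic difference is the normalization bookkeeping: instead of tracking $1+O(\epsilon)$ rescaling corrections, the paper just uses $\left\| \tau_{i-1} \right\|_{1+\epsilon} \leq 1$, so that dividing the per-step disturbance by this norm only strengthens the bound.
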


\begin{proof}
For any $i \in \{ 1, \ldots, n \}$, let
\begin{eqnarray}
\tau_i & = & \left( P_0 \tau P_0 + \cdots + P_{i-1} \tau P_{i-1} \right) + (P_i + \cdots + P_n ) \tau (P_i + \cdots + P_n ),
\end{eqnarray}
and let $\tau_0 = \tau$.
Note that applying Proposition~\ref{binuncertaintyprop} to
the measurement $\{ R_n , \mathbb{I} - R_n \}$ and
the state $\tau_{n-1}/\left\| \tau_{n-1} \right\|_{1+\epsilon}$ yields
\begin{eqnarray}
\left\| \tau_n \right\|_{1+\epsilon} & \leq & 
\left[ 1 - \frac{\epsilon}{2}  \left( \frac{ \left\| \tau_n
- \tau_{n-1} \right\|_{1+\epsilon} }{ \left\| \tau_{n-1}
\right\|_{1+\epsilon}} \right)^2 \right] \left\| \tau_{n-1} \right\|_{1+\epsilon} \\
& \leq & 
\left[ 1 - \frac{\epsilon}{2}  \left\| \tau_n
- \tau_{n-1} \right\|_{1+\epsilon}^2 \right]
\left\| \tau_{n-1} \right\|_{1+\epsilon}.
\end{eqnarray}
By an inductive argument, we then have
\begin{eqnarray}
\left\| \tau_n \right\|_{1+\epsilon} & \leq & 
\prod_{i=1}^n \left( 1 - \frac{\epsilon}{2}  \left\| \tau_i
- \tau_{i-1} \right\|_{1+\epsilon}^2 \right) \cdot 1\\
& \leq & 1 - \frac{\epsilon}{2} \sum_{i=1}^n 
\left\| \tau_i
- \tau_{i-1} \right\|_{1+\epsilon}^2 + O_n ( \epsilon^2) \\
& \leq & 1 - \frac{\epsilon}{2n} \left( \sum_{i=1}^n 
\left\| \tau_i
- \tau_{i-1} \right\|_{1+\epsilon} \right)^2
+ O_n (\epsilon^2) \\
& \leq & 1 - \frac{\epsilon}{2n} \left\| \tau_n -  \tau_0 \right\|_{1+\epsilon}^2
+ O_n ( \epsilon^2),
\end{eqnarray}
The operator $\tau_n$ is equal to $\tau'$, and this completes the proof.
\end{proof}

\section{Rate curves}

\label{rcsec}

Our next goal is prove inequalities which relate
the randomness generated by a device to its performance
at a given game.  First we state the following
alternative version of Proposition~\ref{ternaryprop},
which follows easily from the fact that
$\left< X \right>_{1+\epsilon} = \left\| X \right\|_{1+\epsilon}^{1+\epsilon}$.

\begin{proposition}
\label{ternaryprop2}
For any finite dimensional Hilbert space $V$, any positive semidefinite
operator $\tau \colon V \to V$ satisfying $\left< \tau \right>_{1+\epsilon} = 1$,
and any binary projective measurement $\{ R_0, R_1 , \ldots, R_n \}$ on $V$, the following
holds.  Let
$\tau' = \sum_i P_i \tau P_i$.  Then
\begin{eqnarray}
\left<  \tau' \right>_{1+\epsilon} & \leq &
1 - \frac{\epsilon}{2n} \left< \tau - \tau' \right>^2_{1+\epsilon} + O_n ( \epsilon^2). \qed
\end{eqnarray}
\end{proposition}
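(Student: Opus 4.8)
The plan is to transfer the inequality of Proposition~\ref{ternaryprop}, which is stated for $\left\| \cdot \right\|_{1+\epsilon}$, over to the function $\left< \cdot \right>_{1+\epsilon}$ by means of the identity $\left< X \right>_{1+\epsilon} = \left\| X \right\|_{1+\epsilon}^{1+\epsilon}$. First I would note that the hypotheses coincide: $\left< \tau \right>_{1+\epsilon} = 1$ is equivalent to $\left\| \tau \right\|_{1+\epsilon} = 1$, so Proposition~\ref{ternaryprop} applies verbatim. Abbreviating $a = \left\| \tau' \right\|_{1+\epsilon}$ and $d = \left\| \tau - \tau' \right\|_{1+\epsilon}$, the target inequality reads $a^{1+\epsilon} \leq 1 - \frac{\epsilon}{2n} d^{2(1+\epsilon)} + O_n(\epsilon^2)$, since $\left< \tau' \right>_{1+\epsilon} = a^{1+\epsilon}$ and $d^{2(1+\epsilon)} = (d^{1+\epsilon})^2 = \left< \tau - \tau' \right>_{1+\epsilon}^2$, while Proposition~\ref{ternaryprop} supplies the estimate $a \leq 1 - \frac{\epsilon}{2n} d^2 + O_n(\epsilon^2)$.

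Two elementary estimates then close the gap between the two sides. On the left, I would use that $a$ is close to $1$: the pinching bound (\ref{measuresch1}) gives $1 - O_n(\epsilon) \leq a \leq 1$, so writing $a^{1+\epsilon} = a \cdot a^\epsilon$ and expanding $a^\epsilon = \exp(\epsilon \ln a)$ with $\ln a = O_n(\epsilon)$ yields $a^{1+\epsilon} = a + O_n(\epsilon^2)$. On the right, I would show that replacing $d^{2(1+\epsilon)}$ by $d^2$ costs only $O_n(\epsilon^2)$ once the prefactor $\epsilon/(2n)$ is attached; since $d = \left\| \tau - \tau' \right\|_{1+\epsilon} \leq \left\| \tau \right\|_{1+\epsilon} + \left\| \tau' \right\|_{1+\epsilon} \leq 2$ by the triangle inequality, it is enough to verify $\lvert d^2 - d^{2(1+\epsilon)}\rvert = O(\epsilon)$ uniformly for $d \in [0,2]$. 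Chaining these with Proposition~\ref{ternaryprop} gives $a^{1+\epsilon} = a + O_n(\epsilon^2) \leq 1 - \frac{\epsilon}{2n} d^2 + O_n(\epsilon^2) = 1 - \frac{\epsilon}{2n} d^{2(1+\epsilon)} + O_n(\epsilon^2)$, which is the claim.

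The one place that requires genuine care — and the only real obstacle — is the uniformity of $\lvert d^2 - d^{2(1+\epsilon)}\rvert = O(\epsilon)$, because the naive expansion $d^{2\epsilon} = 1 + O(\epsilon \ln d)$ degrades as $d \to 0$, where $\ln d \to -\infty$. This is resolved by retaining the $d^2$ prefactor: from $d^2 - d^{2(1+\epsilon)} = -\int_0^\epsilon 2 (\ln d)\, d^{2+2s}\, ds$ one observes that $\lvert \ln d\rvert\, d^{2+2s}$ is bounded by an absolute constant for $d \in (0,2]$ and $s \in [0,1]$ (using that $d^2 \ln(1/d)$ is bounded near $0$), whence $\lvert d^2 - d^{2(1+\epsilon)}\rvert \leq C\epsilon$ uniformly. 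The companion estimate $a^{1+\epsilon} = a + O_n(\epsilon^2)$ likewise only relies on $a$ being bounded away from $0$ for $\epsilon$ small depending on $n$, which the asymptotic convention of Section~\ref{defsec} permits; everything else is a routine substitution.
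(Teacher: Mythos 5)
Your proposal is correct and is essentially the paper's own argument: the paper presents Proposition~\ref{ternaryprop2} as an immediate consequence of Proposition~\ref{ternaryprop} via the identity $\left< X \right>_{1+\epsilon} = \left\| X \right\|_{1+\epsilon}^{1+\epsilon}$, and your write-up simply supplies the bookkeeping for that conversion, the uniform bound $\left| d^2 - d^{2(1+\epsilon)} \right| = O(\epsilon)$ for $d \in [0,2]$ being the only point of substance. One minor simplification: since $0 \leq \left\| \tau' \right\|_{1+\epsilon} \leq 1$ (pinching does not increase the norm), you have $\left< \tau' \right>_{1+\epsilon} = \left\| \tau' \right\|_{1+\epsilon}^{1+\epsilon} \leq \left\| \tau' \right\|_{1+\epsilon}$ outright, so your expansion $a^{1+\epsilon} = a + O_n(\epsilon^2)$ is not needed.
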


\subsection{The $(1+\epsilon)$-score of a game}

\label{parameterssubsec}

\begin{definition}
\label{gameopdef}
Let $G = (p, H )$ be a  game (with alphabets $\mathcal{A}, \mathcal{X}$), and let $D$ be a compatible abstract
device.  Then, the \textbf{game operator} for $D$ determined by $G$ is
\begin{eqnarray}
K := \sum_{\substack{a \in \mathcal{A} \\ x \in \mathcal{X} }} p( a ) H ( a, x ) P_a^x.
\end{eqnarray}
where $\{ \{ P_a^x \}_x \}_a$ denote the measurements performed by $D$.  Let the expressions
 $\rho_G, \phi_G$ denote the operators
\begin{eqnarray}
\rho_G := \left( \sqrt{\rho} K \sqrt{\rho} \right)^\top \hskip1in \phi_G := \sqrt{K} \rho \sqrt{K}.
\end{eqnarray}
\end{definition}

We note that since all of the terms $H(a,x)$ are assumed to be less than or
equal to $1$, it follows that the game operator always satisfies $K \leq \mathbb{I}$.  

Some intuition for the expression $\rho_G$ is as follows: Let $Q'$ be a quantum system of the same dimension
as the system $Q$ inside $D$, and suppose that $QQ'$ is in a pure state which purifies $Q$.  
Suppose that the game $G$ is played, a score $h \in [0, 1 ]$ is obtained, and then
a classical random variable $X \in \{ P, F \}$ is set to be equal to $P$ with probability
$h$, and $F$ with probability $(1-h)$.  Then, $\rho_G$ is isomorphic to
the subnormalized state of $Q'$ corresponding to the event $X = P$.

\begin{definition}
\label{schscoredef}
Let $G = (p, H)$ be a  game with input alphabet $\mathcal{A}$ and
output alphabet $\mathcal{X}$, and let $D$ be a compatible abstract device.  Let $\epsilon \in [0, 1 ]$.
Then, the \textbf{$(1+\epsilon)$-score} of $D$ (for $G$) is given by
\begin{eqnarray}
\label{ratecurveexp}
W_G^\epsilon ( D ) & := & \frac{  \left< \phi_G \right>_{1+\epsilon}}{\left< \phi \right>_{1+\epsilon}}.
\end{eqnarray}
\end{definition}

Let $W_G ( D ) = W_G^0 ( D )$.  Note that if $D$ is an ordinary quantum device this is simply the expected score of the device $D$ at the
game $D$.

\begin{proposition}
\label{upperlimitprop}
Let $G$ be a  game with distinguished input $\overline{a}$. Then for any
compatible abstract device $D$ whose output on input $\overline{a}$ is classically
predictable, we
must have
\begin{eqnarray}
W_G^\epsilon ( D ) \leq W_{G, \overline{a} } + O (\epsilon ).
\end{eqnarray}
\end{proposition}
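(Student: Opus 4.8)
The plan is to bound the numerator and denominator of $W_G^\epsilon(D)=\left<\phi_G\right>_{1+\epsilon}/\left<\phi\right>_{1+\epsilon}$ separately: I would reduce the numerator $\left<\phi_G\right>_{1+\epsilon}=\Tr[(\sqrt K\phi\sqrt K)^{1+\epsilon}]$ to a quantity linear in $\phi^{1+\epsilon}$, and then recognize $\phi^{1+\epsilon}$, once normalized by the denominator, as the state of a classically predictable device. First I would dispose of the power $(\sqrt K\phi\sqrt K)^{1+\epsilon}$ using the Lieb--Thirring trace inequality: for positive semidefinite $A,B$ and exponent $t=1+\epsilon\ge 1$, $\Tr[(\sqrt A\,B\,\sqrt A)^{t}]\le\Tr[A^{t}B^{t}]$. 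With $A=K$ and $B=\phi$ this gives $\left<\phi_G\right>_{1+\epsilon}\le\Tr[K^{1+\epsilon}\phi^{1+\epsilon}]$. Because $H$ takes values in $[0,1]$ and $p$ is a probability distribution, the game operator satisfies $0\le K\le\mathbb I$, so its eigenvalues lie in $[0,1]$ and $K^{1+\epsilon}\le K$; since $\phi^{1+\epsilon}\ge 0$ this yields
\[
\left<\phi_G\right>_{1+\epsilon}\le\Tr[K\,\phi^{1+\epsilon}]=\sum_{a,x}p(a)H(a,x)\Tr\!\left[P_a^x\,\phi^{1+\epsilon}\right].
\]

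The second step is to reinterpret this bound. Set $\widetilde\phi:=\phi^{1+\epsilon}/\Tr[\phi^{1+\epsilon}]$, a genuine density operator, and let $\widetilde D$ be the device with the same measurements $\{P_a^x\}$ but initial state $\widetilde\phi$. Dividing the display by $\left<\phi\right>_{1+\epsilon}=\Tr[\phi^{1+\epsilon}]$ shows that $W_G^\epsilon(D)$ is at most the ordinary expected score $W_G(\widetilde D)=\sum_{a,x}p(a)H(a,x)\Tr[P_a^x\widetilde\phi]$. The key observation is that $\widetilde D$ remains classically predictable on $\overline a$: predictability of $D$ means $\phi=\sum_x P_{\overline a}^x\phi P_{\overline a}^x$, i.e. $\phi$ is block diagonal with respect to the orthogonal family $\{P_{\overline a}^x\}$, and block diagonality is preserved under the functional calculus $\phi\mapsto\phi^{1+\epsilon}$, so $\widetilde\phi=\sum_x P_{\overline a}^x\widetilde\phi P_{\overline a}^x$. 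Since $W_{G,\overline a}$ is by definition the supremum of expected scores over exactly such devices, $W_G(\widetilde D)\le W_{G,\overline a}$, and therefore $W_G^\epsilon(D)\le W_{G,\overline a}$.

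I expect the only real obstacle to be the first step, namely controlling the nonlinearity of $X\mapsto X^{1+\epsilon}$ so as to replace the Schatten quantity $\left<\phi_G\right>_{1+\epsilon}$ by an expression linear in $\phi^{1+\epsilon}$; everything after that is bookkeeping. The Lieb--Thirring inequality handles this cleanly, and in fact the argument above proves the sharper statement with no error term, by a bound that is automatically device-independent (it involves no constants depending on $D$). The $O(\epsilon)$ appearing in the proposition is thus harmless slack: if one preferred to avoid invoking Lieb--Thirring, it would suffice to replace the exact inequality $K^{1+\epsilon}\le K$ and the passage from $\phi_G^{1+\epsilon}$ to $K\phi^{1+\epsilon}$ by a first-order estimate, which costs only a multiplicative factor $1+O(\epsilon)$; since $W_{G,\overline a}\le W_G\le 1$, such a factor translates into the additive error $O(\epsilon)$, and the $\epsilon\to 0$ consistency $W_G^0(D)=W_G(D)$ confirms that the estimate is tight at $\epsilon=0$.
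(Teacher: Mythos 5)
Your proof is correct, and it follows a genuinely different route from the paper's. The paper splits into two cases: for a device that is \emph{deterministic} on $\overline{a}$ it proves the operator inequality $P_{\overline{a}}^{\overline{x}} K P_{\overline{a}}^{\overline{x}} \leq W_{G,\overline{a}}\, P_{\overline{a}}^{\overline{x}}$ by contradiction (a violating unit vector supported in $P_{\overline{a}}^{\overline{x}}$ would itself define a deterministic device exceeding $W_{G,\overline{a}}$), which gives $\left< \phi_G \right>_{1+\epsilon} \leq W_{G,\overline{a}} \left< \phi \right>_{1+\epsilon}$ exactly; it then handles the classically predictable case by decomposing the state as a convex combination of deterministic ones and appealing to the approximate linearity (\ref{eprel2}) of $\left< \cdot \right>_{1+\epsilon}$ --- that last step is the sole source of the $O(\epsilon)$ in the statement. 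You instead linearize the Schatten quantity with Araki--Lieb--Thirring and $K^{1+\epsilon} \leq K$, then fold the nonlinearity into the state: $\widetilde{\phi} = \phi^{1+\epsilon}/\Tr[\phi^{1+\epsilon}]$ is the initial state of a compatible device $\widetilde{D}$ (compatibility constrains only the measurements and alphabets, not the state), and $\widetilde{D}$ is still classically predictable on $\overline{a}$ because the block-diagonal structure $\phi = \sum_x P_{\overline{a}}^x \phi P_{\overline{a}}^x$ is preserved under functional calculus, whence $W_G^\epsilon(D) \leq W_G(\widetilde{D}) \leq W_{G,\overline{a}}$. Your argument avoids both the case split and the convexity step, and so proves the sharper bound with no $O(\epsilon)$ slack at all; it is also structurally parallel to the paper's own later trick of tilting the device state by $\phi \mapsto \phi^{1/(1+\epsilon)}$ in Proposition~\ref{succstateprop}. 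The trade-off is that you import an external (though standard) matrix inequality, whereas the paper's proof is self-contained given the elementary facts stated in Section~\ref{funcsec}. Your closing remark about replacing Lieb--Thirring by a first-order estimate is the only loose part of the write-up, but it is not needed for correctness; the main argument stands without it.
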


\begin{proof}
First consider the case where $D$ behaves deterministically 
on input $\overline{a}$.  Let
\begin{eqnarray}
D = ( Q, \phi, \{ P_a^x \} , \{ U_a \}).
\end{eqnarray}
We have
$\Supp \phi \subseteq \Supp P_{\overline{a}}^{\overline{x}}$ for some $\overline{x}$.  Let
$K$ denote the game operator for $D,G$.

The inequality
\begin{eqnarray}
P_{\overline{a}}^{\overline{x}} K P_{\overline{a}}^{\overline{x}}
\leq \left( W_{G, \overline{a}} \right) P_{\overline{a}}^{\overline{x}}
\end{eqnarray}
must hold, since if it did not, we could find a unit vector $v \in
\Supp P_{\overline{a}}^{\overline{x}}$ such that $v K v^* >
W_{G, \overline{a}}$, which would mean the device
\begin{eqnarray}
D' = ( Q, v v^*, \{ P_a^x \} , \{ U_a \})
\end{eqnarray}
breaks the score threshhold $W_{G, \overline{a}}$ and
gives deterministic output on $\overline{a}$, a contradiction.
Therefore,
\begin{eqnarray}
\left< \phi_G \right>_{1+\epsilon} & = & \left< \sqrt{K} \phi \sqrt{K}
\right>_{1+\epsilon} \\
& = & \left< \sqrt{\phi} K \sqrt{\phi} \right>_{1+\epsilon} \\
& = & \left< \sqrt{\phi} P_{\overline{a}}^{\overline{x}} K 
P_{\overline{a}}^{\overline{x}} \sqrt{\phi} \right>_{1+\epsilon} \\
& \leq & W_{G, \overline{a}} \left< \sqrt{\phi} P_{\overline{a}}^{\overline{x}}
\sqrt{\phi}  \right>_{1+\epsilon} \\
& = & W_{G, \overline{a}} \left< \phi \right>_{1+\epsilon},
\end{eqnarray}
as desired.

Now consider the case where $D$ is classically predictable
on input $\overline{a}$.  In this case, the state of $D$ is a
convex combination of states that would give deterministic
output on input $\overline{a}$, and thus the approximate
linearity of $\left\< \cdot \right>_{1+\epsilon}$ (see (\ref{eprel2}))
yields the desired result.
\end{proof}

When $D$ is a quantum device and $\overline{a}$ is an input letter, we measure
the amount of randomness produced by $D$ on input $\overline{a}$ by comparing
the values of the function $\left< \cdot \right>_{1+\epsilon}$ applied
to the premeasurement and postmeasurement states of $D$.

\begin{definition}
\label{schrandomnessdef}
Let $D$ be a quantum device and let $\overline{a}$ be an input letter
for $D$.  
Then, the \textbf{$(1+\epsilon)$-randomness} of $D$ for input
$\overline{a}$ is the following quantity:
\begin{eqnarray}
\label{ratecurveexp}
- \frac{1}{\epsilon} \log \left( \frac{\sum_{x \in \mathcal{X}} \left< \phi_{\overline{a}}^x \right>_{1+\epsilon}}{
\left< \phi \right>_{1+\epsilon} } \right).
\end{eqnarray}
Let $G$ be a  game with which $D$ is compatible.  Then,
the $(1+\epsilon)$-randomness of $D$ for the game $G$ is the following
quantity:
\begin{eqnarray}
\label{ratecurveexp}
- \frac{1}{\epsilon} \log \left( \frac{\sum_{a \in \mathcal{A},x \in \mathcal{X}} p ( a ) \left< \phi_a^x \right>_{1+\epsilon}}{
\left< \phi \right>_{1+\epsilon} } \right),
\end{eqnarray}
where $p$ denotes the input distribution for $G$.
\end{definition}

\subsection{Universal rate curves for fixed input}

\begin{definition}
Let $G$ be a  game, and let
$\pi \colon [0, W_G ] \to \mathbb{R}_{\geq 0}$ be a nondecreasing 
convex function which is 
differentiable on $(0, W_G )$.  Then, $\pi$ is a \textbf{rate curve for $(G, \overline{a})$} if for
all compatible abstract devices $D$,
\begin{eqnarray}
- \frac{1}{\epsilon} \log \left( \frac{  \sum_x \left< \phi_{\overline{a}}^x \right>_{1+\epsilon}  }{ \left< \phi \right>_{1+\epsilon } } \right) 
 & \geq & \pi ( W^{\epsilon}_G ( D ) ) - O ( \epsilon ).
\end{eqnarray}
\end{definition}

\begin{definition}
Let $G$ be a  game, and let
$\pi \colon [0, W_G ] \to \mathbb{R}_{\geq 0}$ be a nondecreasing 
convex function which is 
differentiable on $(0, W_G )$.  Then, $\pi$ is a \textbf{rate curve for $G$} if for
all compatible abstract devices $D$,
\begin{eqnarray}
- \frac{1}{\epsilon} \log \left( \frac{ \sum_{a,x} p ( a ) \left< \phi_a^x \right>_{1+\epsilon}  }{ \left< \phi \right>_{1+\epsilon } } \right) 
 & \geq & \pi ( W^{\epsilon}_G ( D ) ) - O ( \epsilon ).
\end{eqnarray}
\end{definition}

\begin{theorem}
\label{fixedinputthm}
Let $G$ be a   game with output alphabet
size $r \geq 2$, let $\overline{a}$ be the distinguished input letter for $G$, and let $w
= W_{G, \overline{a}}$.  Then, the following function is a rate curve for $(G, \overline{a})$.
\begin{eqnarray}
\pi ( x ) & = & \left\{ \begin{array}{cl}
\frac{ 2 (\log e ) ( x - w )^2 }{r - 1 } & \textnormal{ if }
x > w \\ \\
 0 & \textnormal{ otherwise.} 
\end{array} \right.
\end{eqnarray}
\end{theorem}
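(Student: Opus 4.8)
The plan is to bound the $(1+\epsilon)$-randomness of $D$ on input $\overline{a}$ from below by two independent estimates, both of which feature the Schatten \emph{disturbance} caused by the measurement $\{ P_{\overline{a}}^x \}_x$, and then to eliminate the disturbance between them. First I would normalize so that $\left< \phi \right>_{1+\epsilon} = 1$, write $\tau = \phi$, and set $\tau' = \sum_x P_{\overline{a}}^x \tau P_{\overline{a}}^x$. Since $U_{\overline{a}}$ is unitary (hence invisible to $\left< \cdot \right>_{1+\epsilon}$) and $\tau'$ is block diagonal for $\{ P_{\overline{a}}^x \}$, one has $\sum_x \left< \phi_{\overline{a}}^x \right>_{1+\epsilon} = \left< \tau' \right>_{1+\epsilon}$, so the quantity to be bounded below is exactly $-\tfrac{1}{\epsilon}\log \left< \tau' \right>_{1+\epsilon}$.

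For the first estimate (randomness versus disturbance) I would apply Proposition~\ref{ternaryprop2} to the $r$-outcome measurement $\{ P_{\overline{a}}^x \}_x$, obtaining $\left< \tau' \right>_{1+\epsilon} \leq 1 - \tfrac{\epsilon}{2(r-1)} \left< \tau - \tau' \right>^2_{1+\epsilon} + O_r(\epsilon^2)$, and then use $-\log(1-u) \geq (\log e)\,u$ to conclude
\[
-\tfrac{1}{\epsilon}\log \left< \tau' \right>_{1+\epsilon} \;\geq\; \tfrac{\log e}{2(r-1)} \left< \tau - \tau' \right>^2_{1+\epsilon} - O_r(\epsilon).
\]

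For the second estimate (score versus disturbance) the trick is to make the state enter \emph{linearly}: since $\sqrt{\phi}K\sqrt{\phi}$ and $\sqrt{K}\phi\sqrt{K}$ share their nonzero spectrum, $W_G^\epsilon(D) = \left< \sqrt{K}\phi\sqrt{K} \right>_{1+\epsilon}$, and I would split $\phi = \tau' + (\tau - \tau')$. The diagonal piece $\tau'$ is the initial state of a device that is classically predictable on $\overline{a}$, so Proposition~\ref{upperlimitprop} (whose content is the operator inequality $\sum_x P_{\overline{a}}^x K P_{\overline{a}}^x \leq w\,\mathbb{I}$) gives $\left< \sqrt{K}\tau'\sqrt{K} \right>_{1+\epsilon} \leq w + O(\epsilon)$; the off-diagonal piece is handled by submultiplicativity, $\left\| \sqrt{K} \right\|_{\mathrm{op}} \leq 1$, which gives $\left< \sqrt{K}(\tau-\tau')\sqrt{K} \right>_{1+\epsilon} \leq \left< \tau - \tau' \right>_{1+\epsilon}$. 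Approximate additivity (\ref{eprel2}) then yields $W_G^\epsilon(D) - w \leq \left< \tau - \tau' \right>_{1+\epsilon} + O(\epsilon)$.

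Eliminating the disturbance between the two estimates produces a rate curve of the stated form, and it remains only to observe that the clipped quadratic $\pi$ is trivially nondecreasing, convex and differentiable, and to verify that every error term is device-independent ($O(\epsilon)$ uniform in $D$). The main obstacle is the sharp \emph{constant}: the combination above delivers $\tfrac{\log e}{2(r-1)}(x-w)^2$, a factor of four below the claimed $\tfrac{2\log e}{r-1}(x-w)^2$. Matching the theorem requires the stronger score bound $W_G^\epsilon(D) - w \leq \tfrac{1}{2}\left< \tau - \tau' \right>_{1+\epsilon} + O(\epsilon)$, which the crude triangle inequality in the second estimate cannot produce. To gain the factor $\tfrac12$ I would exploit the positivity of $K$, which forces $\left\| K - \sum_x P_{\overline{a}}^x K P_{\overline{a}}^x \right\|_{\mathrm{op}} \leq w$, together with a simultaneous second-order analysis of the score excess and the disturbance near the classically-predictable manifold --- for instance reducing to binary steps as in Proposition~\ref{ternaryprop} and applying the uniform convexity bound of Proposition~\ref{curvatureprop} to a pair built from $\tau$ and a conjugate $F\tau F$ by a block-diagonal involution $F$, rather than estimating randomness and score separately.
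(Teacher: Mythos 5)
You have the paper's skeleton essentially right: after normalizing $\left< \phi \right>_{1+\epsilon} = 1$, both you and the paper measure the disturbance $\left< \tau - \tau' \right>_{1+\epsilon}$ caused by the pinching $\tau' = \sum_x P_{\overline{a}}^x \tau P_{\overline{a}}^x$, feed it into Proposition~\ref{ternaryprop2} to lower-bound the randomness, and use Proposition~\ref{upperlimitprop} to relate the score of the pinched device to $w$. Your two estimates are individually correct. The problem is the one you candidly flag: your score-versus-disturbance step, $W_G^\epsilon(D) - w \leq \left< \tau - \tau' \right>_{1+\epsilon} + O(\epsilon)$, obtained by splitting $\tau = \tau' + (\tau - \tau')$ and applying the triangle inequality with $\left\| \sqrt{K} \right\|_{\mathrm{op}} \leq 1$, is weaker by a factor of $2$ than what is needed, so your argument proves only the rate curve $\frac{\log e}{2(r-1)}(x-w)^2$ --- the stated theorem with a constant four times too small. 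Your proposed repairs do not close this gap: the operator-norm claim $\left\| K - \sum_x P_{\overline{a}}^x K P_{\overline{a}}^x \right\|_{\mathrm{op}} \leq w$ is unproved and bounds the wrong object, while conjugating $\tau$ by a block-diagonal involution $F$ and invoking Proposition~\ref{curvatureprop} is precisely the mechanism already inside Propositions~\ref{binuncertaintyprop} and~\ref{ternaryprop}; re-running it can only reproduce your first estimate, not strengthen the second.

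The paper's factor of $2$ comes from a different idea, which is the real content of its proof. Because scores satisfy $H \leq 1$, the game operator obeys $0 \leq K \leq \mathbb{I}$, so the column operator with blocks $\sqrt{K}$ and $\sqrt{\mathbb{I}-K}$ is an isometry (as $K + (\mathbb{I}-K) = \mathbb{I}$); conjugating $X = \tau - \tau'$ by it and then pinching to the two diagonal blocks (the $m=2$ case of (\ref{measuresch1})--(\ref{measuresch2}), valid for arbitrary operators) gives
\begin{eqnarray*}
\left< \tau - \tau' \right>_{1+\epsilon} & \geq & \left< \sqrt{K}\,(\tau - \tau')\,\sqrt{K} \right>_{1+\epsilon} + \left< \sqrt{\mathbb{I}-K}\,(\tau - \tau')\,\sqrt{\mathbb{I}-K} \right>_{1+\epsilon}.
\end{eqnarray*}
The $K$-block is a difference of scores and yields $W_G^\epsilon(D) - w - O(\epsilon)$, exactly as in your estimate; but the $(\mathbb{I}-K)$-block yields the \emph{same} quantity a second time, because there the roles of $\tau$ and $\tau'$ reverse: $\left< \sqrt{\mathbb{I}-K}\,\tau\,\sqrt{\mathbb{I}-K} \right>_{1+\epsilon} \leq 1 - W_G^\epsilon(D)$, while $\left< \sqrt{\mathbb{I}-K}\,\tau'\,\sqrt{\mathbb{I}-K} \right>_{1+\epsilon} \geq \left< \tau' \right>_{1+\epsilon}(1-w) - O(\epsilon)$, and one may assume $\left< \tau' \right>_{1+\epsilon} \geq 1 - O(\epsilon)$ since otherwise the randomness already exceeds $\pi$. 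Summing the two blocks gives $\left< \tau - \tau' \right>_{1+\epsilon} \geq 2\left[ W_G^\epsilon(D) - w \right] - O(\epsilon)$, which is exactly the strengthened bound you identified as necessary; inserted into your first estimate it produces the claimed constant $\frac{2 \log e}{r-1}$. Note that this is where the hypothesis that scores lie in $[0,1]$ does real work: without $\mathbb{I} - K \geq 0$ there is no second block to collect the doubled contribution.
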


\begin{proof}
It suffices to give a proof
for abstract devices $D$ satisfying $\left< \phi \right>_{1+\epsilon} = 1$
and $W_G^\epsilon ( D ) > w$, so we will assume those two conditions
in what follows.  Following previous notation, let $\phi' = \sum_x \phi_{\overline{a}}^x$,
and let $K$ denote the game operator.
Let $D'$
denote the device $D$ with the operator $\phi$ replaced by $\phi'$.
We will prove the desired rate curve by comparing
the distance between $\phi$ and $\phi'$ the difference between
their $(1+\epsilon)$-scores.  
 
Note that for any Hermitian operator $X$, the value of the operator
\begin{eqnarray}
\left[ \begin{array}{c|c} \sqrt{K} X \sqrt{K} & \sqrt{K} X \sqrt{\mathbb{I} - K} \\ \hline
\sqrt{\mathbb{I} - K} X \sqrt{ K} & \sqrt{\mathbb{I} - K} X \sqrt{\mathbb{I} - K} \end{array}
\right]
\end{eqnarray}
under $\left< \cdot \right>_{1+\epsilon}$ is the same as that of $X$, because the
two operators are unitarily equivalent.  Thus, using the discussion
of inequalities (\ref{measuresch1}--\ref{measuresch2}), we have
\begin{eqnarray}
\left< X \right>_{1+\epsilon} & \geq & 
\left< \sqrt{K} X \sqrt{K} \right>_{1+\epsilon} + 
\left< \sqrt{ \mathbb{I} - K} X \sqrt{\mathbb{I} - K} \right>_{1+\epsilon}.
\end{eqnarray}
Applying this for $X = \phi - \phi'$, followed by the general inequality
\begin{eqnarray}
\left< Y - Z \right>_{1+\epsilon} \geq \left( 1+ O (\epsilon ) \right) \left| \left< Y \right>_{1+
\epsilon} - \left< Z \right>_{1+\epsilon} \right|,
\end{eqnarray}
 yields
\begin{eqnarray}
\left< \phi - \phi' \right>_{1+\epsilon}  & \geq & 
\left< \sqrt{K} \phi \sqrt{K} \right>_{1+\epsilon} - \left< \sqrt{K} \phi' \sqrt{K} \right>_{1+\epsilon} \\
& & + 
\left< \sqrt{\mathbb{I} - K} \phi' \sqrt{
\mathbb{I} - K} \right>_{1+\epsilon} - \left< \sqrt{\mathbb{I} -  K} \phi \sqrt{
\mathbb{I} - K} \right>_{1+\epsilon} - O ( \epsilon )  \\
& \geq & 
W_G^\epsilon ( D) - W^\epsilon_G ( D') + [1 - W^\epsilon_G ( D' ) ] - [1 - W^\epsilon_G ( D ) ]  - O ( \epsilon) \\
\label{keyineqagain}
& = & 2 [ W^\epsilon_G ( D ) -  W^\epsilon_G ( D' )] - O ( \epsilon ).
\end{eqnarray}

Note that by Proposition~\ref{upperlimitprop}, we have
\begin{eqnarray}
W_G^\epsilon ( D' ) & \leq &  (w + O ( \epsilon ) )
\left< \phi' \right>_{1+\epsilon} \\
& \leq & (w + O ( \epsilon ) )
\left< \phi \right>_{1+\epsilon} \\
& = & (w + O ( \epsilon ) ),
\end{eqnarray}
and thus (\ref{keyineqagain}) implies
\begin{eqnarray}
\left< \phi - \phi' \right>_{1+\epsilon}  & \geq &
2 [ W^\epsilon_G ( D ) -  w ] - O ( \epsilon ).
\end{eqnarray}
Substituting this relationship into Proposition~\ref{ternaryprop} yields
\begin{eqnarray}
 \left< \phi' \right>_{1+\epsilon}  & \leq & 1 - 
\frac{2\epsilon }{r-1} \left( W_G ( D )  -  w  \right)^2
+ O ( \epsilon^2),
\end{eqnarray}
which implies the desired rate curve.
\end{proof}

\section{A general security proof}

\label{protsec}

\label{secpfapp}

In this section we provide a general proof of security for Protocol $R_{gen}$
in Figure~\ref{roverlineprot}.  The method of proof
is a generalization of that of our previous paper
on this topic \cite{MS14v3}.

Note that for any device $D$, each device-state $\phi_\mathbf{a}^{\mathbf{x}}$
has the same singular values as the corresponding adversary-state
$\rho_\mathbf{a}^{\mathbf{x}}$, and so any calculation involving the singular values
of the first can be rewritten in terms of the second, and vice versa.  Since this section
is concerned with proving security against an adversary, we will focus on expressions
involving the adversary states.

\subsection{The weighted $(1+\epsilon)$-randomness function}

If $G$ is an unbounded  game, and $D$ is an abstract device compatible
with $G$, then let $R_G^\epsilon (D )$ denote the $(1+\epsilon)$-randomness
of $D$ for $G$ (Definition~\ref{schrandomnessdef}):
\begin{eqnarray}
R_G^\epsilon (D ) & = & - \frac{1}{\epsilon} \log \left( \frac{ 
\sum_{a,x} p ( a ) \left< \phi_a^x \right>_{1+\epsilon} }{\left< \phi
\right>_{1+\epsilon} } \right).
\end{eqnarray}
Equivalently,
\begin{eqnarray}
R_G^\epsilon (D ) & = & - \frac{1}{\epsilon} \log \left( \frac{ 
\sum_{a,x} p ( a ) \left< \rho_a^x \right>_{1+\epsilon} }{\left< \rho
\right>_{1+\epsilon} } \right).
\end{eqnarray}
Also if $a$ denotes an input for $G$, let $R_{a}^\epsilon ( D )$ denote the $(1+\epsilon)$-randomness of $D$ on input $\overline{a}$:
\begin{eqnarray}
R_{a}^\epsilon (D ) & = & - \frac{1}{\epsilon} \log \left( \frac{ 
\sum_{x}  \left< \rho_{a}^x \right>_{1+\epsilon} }{\left< \rho
\right>_{1+\epsilon} } \right).
\end{eqnarray}

For $s \in \mathbb{R}$, let $R_G^{\epsilon, s } ( D )$ denote the expression
\begin{eqnarray}
R_G^{\epsilon, s} (D ) & = & - \frac{1}{\epsilon} \log \left( \frac{ 
\sum_{a,x} p ( a ) 2^{\epsilon s H ( a, x )} \left< \rho_a^x \right>_{1+\epsilon} }{\left< \rho \right>_{1+\epsilon} } \right),
\end{eqnarray}
which we will call the $(1+\epsilon)$-randomness of $D$ for $G$, weighted by $s$.
(This quantity is central to the inductive argument in our security proof.)

\begin{proposition}
\label{devrelnprop}
Let $G$ be a  game and let
$s \in \mathbb{R}$.  Then, for all compatible devices $D$,
\begin{eqnarray}
R_G^{\epsilon,s} ( D ) \geq 
R_G^{\epsilon} ( D ) - s W_G^{\epsilon} ( D ) - O ( \epsilon ).
\end{eqnarray}
\end{proposition}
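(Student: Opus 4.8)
The plan is to unfold both randomness quantities, cancel the common normalization, and reduce the claim to a first-order comparison between the $\left< \cdot \right>_{1+\epsilon}$-weighted average score of $D$ and the game-operator score $W_G^\epsilon(D)$. Write $S = \sum_{a,x} p(a) \left< \rho_a^x \right>_{1+\epsilon}$ and $S_s = \sum_{a,x} p(a) 2^{\epsilon s H(a,x)} \left< \rho_a^x \right>_{1+\epsilon}$, so that $R_G^{\epsilon,s}(D) - R_G^{\epsilon}(D) = -\frac{1}{\epsilon}\log(S_s/S)$. Since $H(a,x) \in [0,1]$ and $s$ is a fixed constant, $2^{\epsilon s H(a,x)} = 1 + \epsilon s (\ln 2) H(a,x) + O(\epsilon^2)$ uniformly in $(a,x)$ and in $D$. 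Substituting and dividing by $S$ gives $S_s/S = 1 + \epsilon s (\ln 2)\, \overline{H} + O(\epsilon^2)$, where $\overline{H} := S^{-1}\sum_{a,x} p(a) H(a,x) \left< \rho_a^x \right>_{1+\epsilon} \in [0,1]$ is the expected score computed with the weights $p(a)\left< \rho_a^x \right>_{1+\epsilon}$. Taking logs, $\log(S_s/S) = \epsilon s \overline{H} + O(\epsilon^2)$, and hence $R_G^{\epsilon,s}(D) = R_G^{\epsilon}(D) - s\overline{H} + O(\epsilon)$.

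The main step is then to show $\overline{H} = W_G^\epsilon(D) + O(\epsilon)$ uniformly over all compatible $D$. I pass freely between $\rho$ and $\phi$ using that $\rho_a^x$ and $\phi_a^x$ share singular values, so $\left< \rho_a^x \right>_{1+\epsilon} = \left< \sqrt{\phi} P_a^x \sqrt{\phi} \right>_{1+\epsilon} =: \left< A_{a,x} \right>_{1+\epsilon}$, while $W_G^\epsilon(D) = \left< \sqrt{\phi} K \sqrt{\phi} \right>_{1+\epsilon}/\left< \phi \right>_{1+\epsilon}$ with $\sqrt{\phi} K \sqrt{\phi} = \sum_{a,x} p(a) H(a,x) A_{a,x}$. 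Two comparisons are needed. (a) Denominators: for each fixed $a$ the operators $\{P_a^x \phi P_a^x\}_x$ are supported on orthogonal blocks, so $\sum_x \left< A_{a,x} \right>_{1+\epsilon} = \left< \sum_x P_a^x \phi P_a^x \right>_{1+\epsilon}$, which by (\ref{measuresch2}) lies in $[(1-O(\epsilon))\left< \phi \right>_{1+\epsilon}, \left< \phi \right>_{1+\epsilon}]$; averaging over $a$ gives $S = (1+O(\epsilon))\left< \phi \right>_{1+\epsilon}$. (b) Numerators: by the approximate additivity (\ref{eprel2}) applied to the finitely many, $G$-determined terms $c_{a,x} A_{a,x}$ with $c_{a,x} = p(a) H(a,x)$, together with the homogeneity $\left< c_{a,x} A_{a,x} \right>_{1+\epsilon} = c_{a,x}^{1+\epsilon} \left< A_{a,x} \right>_{1+\epsilon}$ and $c_{a,x}^{1+\epsilon} = c_{a,x}(1 + O(\epsilon))$ (here using that each $c_{a,x}$ is a fixed constant of $G$, so $c_{a,x}^\epsilon = 1 + O(\epsilon)$), one obtains $\left< \sqrt{\phi} K \sqrt{\phi} \right>_{1+\epsilon} = (1+O(\epsilon)) \sum_{a,x} c_{a,x} \left< A_{a,x} \right>_{1+\epsilon}$. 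Dividing by $S$ and invoking (a) gives $\overline{H} = (1+O(\epsilon)) W_G^\epsilon(D)$, and since $\overline{H} \le 1$ this upgrades to the two-sided additive bound $\lvert \overline{H} - W_G^\epsilon(D) \rvert = O(\epsilon)$.

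Substituting into the expansion of the first paragraph yields $R_G^{\epsilon,s}(D) = R_G^{\epsilon}(D) - s W_G^\epsilon(D) + O(\epsilon)$, which in particular gives the asserted lower bound. I expect step (b) to be the main obstacle, and the crucial point to stress is why a \emph{two-sided} estimate of $\overline{H}$ against $W_G^\epsilon(D)$ is indispensable: the proposition is asserted for \emph{every} real $s$, including negative $s$, so a one-sided comparison would fail for one sign; it is precisely the multiplicative identity $\overline{H} = (1+O(\epsilon)) W_G^\epsilon(D)$ with $\overline{H} \le 1$ that converts to a symmetric $O(\epsilon)$ and disposes of both signs simultaneously. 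The difficulty in (b) is structural rather than computational: the game-operator score is defined through the single non-commuting operator $\sqrt{\phi} K \sqrt{\phi}$, whereas the randomness quantities are sums of per-outcome terms $\left< A_{a,x} \right>_{1+\epsilon}$, and bridging the two forces a careful, device-independent accounting of the scalar factors $c_{a,x}^\epsilon$ and of the accumulated $(1+O(\epsilon))$ from iterating (\ref{eprel2}) over the ($G$-determined, hence constant) number of summands.
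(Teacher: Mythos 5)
Your proof is correct, and its core is the same as the paper's: expand $2^{\epsilon s H(a,x)} = 1 + \epsilon s (\ln 2) H(a,x) + O(\epsilon^2)$ and identify the resulting linear term with the $(1+\epsilon)$-score. Where you diverge is in what you call step (b). The paper compresses that step into a single asserted inequality, passing from $\sum_{a,x} p(a)\, H(a,x) \left< \rho_a^x \right>_{1+\epsilon}$ to $W_G^\epsilon(D)$ (up to $O(\epsilon)$) in one line with no justification; that one-sided bound does follow from superadditivity (the left inequality of (\ref{eprel2})) together with $c - c^{1+\epsilon} \leq O(\epsilon)$, but the paper never says so. You instead prove the two-sided estimate $\lvert \overline{H} - W_G^\epsilon(D) \rvert = O(\epsilon)$, using block-orthogonality, (\ref{measuresch2}), homogeneity, and both inequalities of (\ref{eprel2}). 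Your insistence on two-sidedness is not pedantry: the proposition quantifies over all $s \in \mathbb{R}$, and the paper's chain of ``$\leq$'' steps, read literally, is only valid for $s \geq 0$; the case $s < 0$ needs exactly the reverse comparison that your step (b) supplies. (Downstream this gap is harmless, since in Propositions~\ref{devindboundprop} and \ref{weightedkeyprop} the weight is $s = \pi'(r) \geq 0$ for a nondecreasing rate curve, but your argument is the one that actually proves the proposition as stated.) Your handling of the normalization --- cancelling $\left< \rho \right>_{1+\epsilon}$ in the ratio rather than assuming $\left< \rho \right>_{1+\epsilon} = 1$ --- is also a small simplification over the paper's reduction to normalized abstract devices.
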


\begin{proof}
For simplicity, we prove the result for abstract
devices $D$ satisfying $\left< \rho \right>_{1+\epsilon} = 1$.  (The general
case then follows easily.)
We have the following, using the fact that $2^t = 1 + (\ln 2 ) t + O ( t^2 )$.
\begin{eqnarray}
&& \sum_{a,x} p ( a ) 2^{\epsilon s H ( a, x )} \left< \rho_a^x \right>_{1+\epsilon}  \\
& = &
\sum_{a,x} p ( a )  \left< \rho_a^x \right>_{1+\epsilon} + 
\sum_{a,x} p ( a )  (2^{\epsilon s H ( a, x )} - 1 ) \left< \rho_a^x \right>_{1+\epsilon} \\
& \leq &
\sum_{a,x} p ( a )  \left< \rho_a^x \right>_{1+\epsilon} + 
\sum_{a,x} p ( a )  (\ln 2 ) \epsilon s H ( a, x) \left< \rho_a^x \right>_{1+\epsilon} 
+  O ( \epsilon^2) \\
& \leq &
\left[ 1 - \epsilon (\ln 2 ) R_G^\epsilon ( D ) \right] + \epsilon s (\ln 2 ) W_{G}^\epsilon ( D )
+ O ( \epsilon^2).
\end{eqnarray}
Applying the function $-\frac{1}{\epsilon} \log ( \cdot )$ to both sides
yields the desired result.
\end{proof}

It is desirable to have a lower bound on the weighted randomness quantity that
is device-independent.  For this, we will use the following principle.  Suppose that $R \colon [a,b] \to \mathbb{R}$ is a differentiable convex function, and we wish to compute
$\min_{x \in [a,b]} R ( x )$.  Then, there are three possibilities:
\begin{enumerate}
\item $R' ( t )  > 0$ for all $t$, in which case $\min_x R ( x) = R ( a )$.

\item $R' ( t ) < 0$ for all $t$, in which case $\min_x R ( x ) = R  ( b ) $.

\item There exists $t_0$ such that $R' ( t_0 ) = 0$, in which case $\min_x R ( x ) =
R ( t_0 )$.
\end{enumerate}
The next proposition asserts a bound on the weighted randomness of a
device, in terms of a rate curve $\pi$.  In order to facilitate the use
of the aforementioned principle, we will choose a weighting term
in the form $\pi' ( x )$ (where $\pi'$ denotes the derivative of $\pi$).

\begin{proposition}
\label{devindboundprop}
Let $G$ be a  game, and let $\pi$ be a rate curve for $G$.  Then, for all compatible
devices $D$ and all $r \in ( 0, W_G )$,
\begin{eqnarray}
R_G^{\epsilon, \pi' ( r ) } ( D ) & \geq & \pi ( r ) - \pi' (r ) r - O ( \epsilon).
\end{eqnarray}
\end{proposition}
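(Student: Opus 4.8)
The plan is to combine the device-dependent bound from Proposition~\ref{devrelnprop} with the rate curve inequality, then optimize over devices using the convexity principle just described. First I would fix the weighting term to be $s = \pi'(r)$, as the statement suggests. Then, for any compatible device $D$, Proposition~\ref{devrelnprop} gives
\begin{eqnarray}
R_G^{\epsilon, \pi'(r)} ( D ) \geq R_G^\epsilon ( D ) - \pi'(r) W_G^\epsilon ( D ) - O ( \epsilon ).
\end{eqnarray}
Since $\pi$ is a rate curve for $G$, the randomness term is bounded below by $R_G^\epsilon ( D ) \geq \pi ( W_G^\epsilon ( D ) ) - O ( \epsilon )$. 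Substituting this in, and writing $x = W_G^\epsilon ( D )$ for brevity, we obtain
\begin{eqnarray}
R_G^{\epsilon, \pi'(r)} ( D ) \geq \pi ( x ) - \pi'(r) x - O ( \epsilon ).
\end{eqnarray}

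The key observation is that the function $f(x) = \pi(x) - \pi'(r) x$ is a convex function of $x$ (since $\pi$ is convex), and its derivative $f'(x) = \pi'(x) - \pi'(r)$ vanishes exactly at $x = r$. Thus by the convexity principle enumerated before the proposition (case 3, the interior critical point), $f$ attains its minimum over the relevant interval at $x = r$, giving
\begin{eqnarray}
\pi ( x ) - \pi'(r) x \geq \pi ( r ) - \pi'(r) r
\end{eqnarray}
for all $x$. Combining this with the previous display yields exactly the claimed bound $R_G^{\epsilon, \pi'(r)} ( D ) \geq \pi ( r ) - \pi'(r) r - O ( \epsilon )$, and since this holds for every compatible $D$ (the right-hand side no longer depends on $D$), the proof is complete.

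The main point requiring care is justifying that the minimization of $f(x)$ is genuinely over an interval on which $x = r$ is attained as the minimizer, and that the domain of $x = W_G^\epsilon ( D )$ ranges within $[0, W_G]$ so that the rate-curve hypothesis and the convexity argument both apply. Because $\pi$ is defined and convex on $[0, W_G]$ and differentiable on $(0, W_G)$, and because $r \in (0, W_G)$, the tangent-line-below-the-graph property of convex functions directly gives $\pi(x) \geq \pi(r) + \pi'(r)(x - r)$ for all $x$ in the domain, which is exactly the needed inequality and sidesteps any case analysis. The only subtlety is bookkeeping the $O(\epsilon)$ error terms so that they remain device-independent; since both the rate-curve error and the error in Proposition~\ref{devrelnprop} are uniform $O(\epsilon)$ terms not depending on $D$, this causes no difficulty.
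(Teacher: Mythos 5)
Your proposal is correct and follows essentially the same route as the paper's proof: apply Proposition~\ref{devrelnprop} with $s = \pi'(r)$, insert the rate-curve bound $R_G^{\epsilon}(D) \geq \pi(W_G^{\epsilon}(D)) - O(\epsilon)$, and then minimize $t \mapsto \pi(t) - \pi'(r)\,t$ using convexity, the minimum being attained at $t = r$. Your closing remark that the tangent-line inequality $\pi(x) \geq \pi(r) + \pi'(r)(x-r)$ makes the case analysis unnecessary is precisely the justification underlying the paper's one-line evaluation of that minimum, so the two arguments coincide.
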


\begin{proof}
We have the following:
\begin{eqnarray}
R_G^{\epsilon, \pi' ( r ) } ( D ) & \geq  &
R_G^{\epsilon} ( D ) - \pi' ( r ) W_G^{\epsilon} ( D ) - O ( \epsilon ) \\
& \geq & \pi ( W_G^{\epsilon}(D ) ) - \pi' ( r ) W_G^{\epsilon} ( D ) - O ( \epsilon ) \\
& \geq & \min_{t \in [0, W_G]} \left[ \pi ( t  ) - \pi' ( r ) t \right] - O ( \epsilon ) \\
& = & \pi (r ) - \pi' ( r ) r - O ( \epsilon ),
\end{eqnarray}
as desired.
\end{proof}

\subsection{The modified game $G_q$}

To analyze Protocol $R_{gen}$, it is helpful to define
a new unbounded game $G_q$, with input alphabet $\mathcal{I} := \{ 0, 1 \} \times \mathcal{A}$ and output alphabet $\mathcal{X}$,
which represents the procedure used
in Protocol $\overline{R}$.  Let 
\begin{eqnarray}
p_q ( (t, a) ) & = & \left\{ \begin{array}{cl} q \cdot p ( a, x ) &
\textnormal{ if } t = 1, \\  \\
(1-q) & \textnormal{ if } t = 0 \textnormal{ and } a = \overline{a}, \\ \\
0 & \textnormal{ if } t = 1 \textnormal{ and } a \neq \overline{a}.
\end{array}
\right. \\ \nonumber \\
H_q ( (t, a), x ) & = & \left\{ \begin{array}{cl} H( a, x )/q &
\textnormal{ if } t = 1, \\ \\
0 & \textnormal{ if } t = 0.
\end{array}
\right.  \label{newh}
\end{eqnarray}
(The inclusion of the denominator $q$ in (\ref{newh}) can be thought of
as compensation for the fact that game rounds only occur with frequency $1/q$.)
Also let $\rho_{\mathbf{i}}^{\mathbf{x}} = \rho_{\mathbf{a}}^{\mathbf{x}}$,
where $\mathbf{i} = ( (t_1, a_1), \ldots, (t_n , a_n ) )$, and define
$W^\epsilon_{G_q} ( D )$, $R^\epsilon_{G_q} ( D )$ and $R^{\epsilon,s}_{G_q} ( D )$ via these states.  We assert the following, which relates the difference
between the weighted and unweighted randomness of $G_q$ to the score
of the original game $G$.

\begin{proposition}
\label{qprop1}
Let $G$ be a  game and let
$s \in \mathbb{R}$.  Then, for all compatible devices $D$, all
$q \in (0, 1)$, and all $\epsilon \in (0, 1 ]$,
\begin{eqnarray}
R_{G_q}^{\epsilon,s} ( D ) \geq 
R_{G_q}^{\epsilon} ( D ) - s W_{G}^{\epsilon} ( D ) - O ( \epsilon /q).
\end{eqnarray}
\end{proposition}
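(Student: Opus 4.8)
The plan is to follow the proof of Proposition~\ref{devrelnprop} essentially verbatim, but with the unbounded game $G_q$ in place of $G$, keeping careful track of the factors of $q$ that the rescaled scoring function $H_q$ introduces. As there, I would first reduce to the normalized case $\left\langle \rho \right\rangle_{1+\epsilon} = 1$, the general case following easily. Writing out the weighted sum defining $R_{G_q}^{\epsilon,s}(D)$ and splitting the input $\mathbf{i} = (t,a)$ according to the value of $t$, the generation rounds ($t=0$) have $H_q = 0$, so their weight factor $2^{\epsilon s H_q}$ equals $1$ and they contribute identically to the numerators of $R_{G_q}^{\epsilon,s}(D)$ and $R_{G_q}^{\epsilon}(D)$; all of the difference comes from the game rounds ($t=1$). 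The key algebraic observation is that on a game round the probability and score factors recombine as $p_q((1,a)) H_q((1,a),x) = (q\, p(a))(H(a,x)/q) = p(a) H(a,x)$, so the factors of $q$ cancel in the first-order term.

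Next I would expand $2^{\epsilon s H(a,x)/q} = 1 + (\ln 2)\epsilon s H(a,x)/q + O((\epsilon s/q)^2)$. Multiplying by the weight $q\, p(a)$ and summing, the linear contribution becomes $(\ln 2)\epsilon s \sum_{a,x} p(a) H(a,x) \left\langle \rho_a^x \right\rangle_{1+\epsilon}$, which is exactly the score sum appearing in the proof of Proposition~\ref{devrelnprop} and is bounded there by $(\ln 2)\epsilon s\,( W_G^\epsilon(D) + O(\epsilon))$. (Here one uses that the game operator of $G_q$ coincides with that of $G$, since $p_q H_q = p H$ and the measurements do not depend on $t$; this is why $W_G^\epsilon$ and not $W_{G_q}^\epsilon$ appears.) The quadratic remainder, by contrast, now carries a surviving factor of $1/q$: each game-round term is $q\,p(a)\cdot O((\epsilon s/q)^2)\left\langle\rho_a^x\right\rangle_{1+\epsilon} = O(\epsilon^2/q)\, p(a)\left\langle\rho_a^x\right\rangle_{1+\epsilon}$, and summing against $\sum_{a,x} p(a)\left\langle\rho_a^x\right\rangle_{1+\epsilon} \le \left\langle\rho\right\rangle_{1+\epsilon} = 1$ leaves a total error of $O(\epsilon^2/q)$. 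Thus the weighted sum equals $2^{-\epsilon R_{G_q}^\epsilon(D)} + (\ln 2)\epsilon s\, W_G^\epsilon(D) + O(\epsilon^2/q)$. Applying $-\tfrac{1}{\epsilon}\log(\cdot)$ as in Proposition~\ref{devrelnprop} and using $\log(1+u) \le u/\ln 2$ to bound the logarithm (so no second expansion is needed) converts the $O(\epsilon^2/q)$ into the claimed $O(\epsilon/q)$ and yields $R_{G_q}^{\epsilon}(D) - s W_G^\epsilon(D) - O(\epsilon/q)$.

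The step I expect to be the crux is controlling the quadratic remainder: it is precisely the unboundedness of $H_q$ — which can be as large as $1/q$, whereas the original $H$ is bounded by $1$ — that degrades the error from the $O(\epsilon)$ of Proposition~\ref{devrelnprop} to $O(\epsilon/q)$ here. The Taylor estimate $2^v = 1 + (\ln 2)v + O(v^2)$ is applied with $v = \epsilon s H(a,x)/q$, and its remainder is uniform only while $v$ stays bounded, i.e.\ in the intended asymptotic regime $\epsilon \to 0$ for fixed $q$ (equivalently $\epsilon/q \to 0$), which is the regime in which the proposition is invoked downstream. The same argument works for every real $s$, since both the two-sided relation $\sum_{a,x} p(a) H(a,x)\left\langle\rho_a^x\right\rangle_{1+\epsilon} = W_G^\epsilon(D) + O(\epsilon)$ and the bound $\log(1+u)\le u/\ln 2$ are insensitive to the sign of $s$.
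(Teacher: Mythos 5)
Your proposal is correct and takes essentially the same route as the paper's own proof: the paper likewise reduces to $\left< \rho \right>_{1+\epsilon} = 1$, splits off the weight factor (which vanishes on generation rounds), uses the cancellation $p_q \cdot H_q = p \cdot H$ on game rounds so that the linear term is controlled by $s\, W_G^\epsilon(D)$ while the Taylor remainder retains the factor $1/q$ and yields $O(\epsilon^2/q)$, and finishes by applying $-\frac{1}{\epsilon}\log(\cdot)$. Your choice to keep the term $2^{-\epsilon R_{G_q}^{\epsilon}(D)}$ exact and invoke $\log(1+u) \leq u/\ln 2$ only once at the end is, if anything, slightly tidier than the paper's expansion of that term, but it is the same argument.
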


\begin{proof}
The proof is similar to that of Proposition~\ref{devrelnprop}, but with a little
more care taken with the error term.  We must take into
 account the fact
that the scores in the game $H_q ( a, x )$ grow as $\Theta ( 1/q)$ as $q \to 0$.  

Again, it suffices to prove the result for abstact devices $D$ satisfying $\left< \rho \right>_{1+\epsilon} = 1$.  Then,
\begin{eqnarray}
&& \sum_{i,x} p_q ( i ) 2^{\epsilon s H_q ( i, x )} \left< \rho_i^x \right>_{1+\epsilon}  \\
& = &
\sum_{i,x} p_q ( i )  \left< \rho_i^x \right>_{1+\epsilon} + 
\sum_{i,x} p_q ( i )  (2^{\epsilon s H_q ( i, x )} - 1 ) \left< \rho_i^x \right>_{1+\epsilon} \\
\label{appline}
& = &
\sum_{i,x} p_q ( i)  \left< \rho_a^x \right>_{1+\epsilon} + 
q \sum_{a,x} p ( a )  (2^{\epsilon s H (a, x )/q} - 1 ) \left< \rho_a^x \right>_{1+\epsilon}.  \\
& \leq &
\sum_{i,x} p_q ( a )  \left< \rho_i^x \right>_{1+\epsilon} + 
q [ \epsilon s (\ln 2 ) W^\epsilon_G  ( D)/q + O ( (\epsilon/q)^2 ) ] \\
& \leq & 1 - \epsilon (\ln 2 ) R_{G_q}^\epsilon  ( D ) + \epsilon s (\ln 2 )
 W^\epsilon_G ( D ) + O ( \epsilon^2/q ).
\end{eqnarray}
Applying the function $-\frac{1}{\epsilon} \log ( \cdot )$ to both sides of the bound above yields the result.
\end{proof}

\begin{proposition}
\label{qprop2}
Let $G$ be a  game.  Then, for all compatible devices $D$, all
$q \in (0, 1)$, and all $\epsilon \in (0, 1 ]$,
\begin{eqnarray}
R_{G_q}^\epsilon ( D ) & \geq & R_{\overline{a}}^\epsilon ( D ) - O ( q ).
\end{eqnarray}
\end{proposition}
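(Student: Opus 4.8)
The plan is to reduce the inequality to a direct comparison of the numerators appearing in $R^\epsilon_{\overline{a}}(D)$ and $R^\epsilon_{G_q}(D)$, which differ only in how they weight the input. As in the proofs of Propositions~\ref{devrelnprop} and~\ref{qprop1}, both quantities are invariant under rescaling $\rho$ (the factor $\left< \rho \right>_{1+\epsilon}$ divides each numerator), so I first normalize to $\left< \rho \right>_{1+\epsilon} = 1$. Unpacking the distribution $p_q$ and splitting the $G_q$-numerator according to $t=0$ versus $t=1$, the generation rounds (which fix the input to $\overline{a}$) contribute $(1-q)A$ and the game rounds contribute $qB$, where
\[
A = \sum_x \left< \rho_{\overline{a}}^x \right>_{1+\epsilon}, \qquad B = \sum_{a,x} p(a) \left< \rho_a^x \right>_{1+\epsilon}.
\]
Since the $\overline{a}$-numerator is exactly $A$, I obtain
\[
R^\epsilon_{\overline{a}}(D) - R^\epsilon_{G_q}(D) = \frac{1}{\epsilon}\log\!\left( 1 + \frac{q(B-A)}{A} \right),
\]
and it remains to show this is at most $O(q)$.

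The crux is the estimate that both $A$ and $B$ lie in the interval $[1 - O(\epsilon), 1]$. For each fixed input $a$, I would use that $\left< \rho_a^x \right>_{1+\epsilon} = \left< \phi_a^x \right>_{1+\epsilon}$ (equal singular values), that the unitary $U_a$ does not change singular values, and that the operators $P_a^x \phi P_a^x$ have mutually orthogonal ranges. Block-diagonality then gives $\sum_x \left< \phi_a^x \right>_{1+\epsilon} = \left< \sum_x P_a^x \phi P_a^x \right>_{1+\epsilon}$, so the measurement inequality (\ref{measuresch2}), applied with blocks indexed by the ranges of $\{P_a^x\}_x$, yields
\[
(1 - O_r(\epsilon)) \left< \phi \right>_{1+\epsilon} \;\leq\; \sum_x \left< \phi_a^x \right>_{1+\epsilon} \;\leq\; \left< \phi \right>_{1+\epsilon},
\]
where $r$ is the (fixed) output-alphabet size. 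Taking $a = \overline{a}$ gives $A \in [1 - O_r(\epsilon), 1]$, and averaging over $a$ with weights $p(a)$ gives the same interval for $B$; in particular $|B - A| \leq O_r(\epsilon)$.

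To finish, if $B \leq A$ the logarithm is nonpositive and the bound is immediate. Otherwise I apply $\log(1+x) \leq x$ together with $A \geq 1 - O_r(\epsilon) \geq 1/2$ for small $\epsilon$, giving
\[
\frac{1}{\epsilon}\log\!\left( 1 + \frac{q(B-A)}{A} \right) \;\leq\; \frac{q}{\epsilon}\cdot\frac{B-A}{A} \;\leq\; \frac{q}{\epsilon}\cdot O_r(\epsilon) \;=\; O_r(q),
\]
which is the claimed device-independent bound (the implied constant depends only on $G$, through $r$). I expect the single point requiring care to be the $|B - A| = O(\epsilon)$ estimate: because randomness carries the prefactor $1/\epsilon$, a merely bounded comparison of the numerators would be useless, and it is essential that (\ref{measuresch2}) provides a \emph{multiplicative} $O(\epsilon)$-comparison so that the $1/\epsilon$ is absorbed and leaves the clean factor of $q$. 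The rest is the routine normalization and the elementary bound $\log(1+x)\le x$.
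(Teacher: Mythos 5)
Your proof is correct and follows essentially the same route as the paper's: both decompose the $G_q$-numerator as $(1-q)A + qB$, bound the game-round contribution by $q\left< \rho \right>_{1+\epsilon}$, and rely on the measurement inequality (\ref{measuresch2}) to guarantee $A \geq (1 - O(\epsilon))\left< \rho \right>_{1+\epsilon}$ so that the $1/\epsilon$ prefactor is absorbed into a clean $O(q)$. The only difference is presentational: the paper compresses this into the single inequality $(1-q)A + q \leq (1 + O(q\epsilon))A$ and leaves the $A \geq 1 - O(\epsilon)$ step implicit, whereas you spell it out (correctly identifying it as the crux).
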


\begin{proof}
Observe the following:
\begin{eqnarray}
\left( \frac{ \sum_i \left< \rho_i^x
\right>_{1+\epsilon} }{\left< \rho \right>_{1+\epsilon} } \right) & = & (1-q) \left( \frac{ \sum_x \left< \rho_{\overline{a}}^x
\right>_{1+\epsilon} }{\left< \rho \right>_{1+\epsilon} } \right)
+ q \left( \frac{ \sum_{a,x} p ( a ) \left< \rho_a^x
\right>_{1+\epsilon} }{\left< \rho \right>_{1+\epsilon} } \right) \\
& \leq & (1-q) \left( \frac{ \sum_x \left< \rho_{\overline{a}}^x
\right>_{1+\epsilon} }{\left< \rho \right>_{1+\epsilon} } \right)
+ q \\
& \leq & ( 1 + O ( q \epsilon ) ) \left( \frac{ \sum_x \left< \rho_{\overline{a}}^x
\right>_{1+\epsilon} }{\left< \rho \right>_{1+\epsilon} } \right).
\end{eqnarray}
Applying the function $-\frac{1}{\epsilon} \log ( \cdot )$ to both sides yields
the desired result.
\end{proof}

Combining Propositions~\ref{qprop1} and \ref{qprop2}, we have the following.
\begin{proposition}
Let $G$ be a  game, and let
$s \in \mathbb{R}$.  Then, for all compatible devices $D$, all
$q \in (0, 1)$, and all $\epsilon \in (0, 1 ]$,
\begin{eqnarray}
R^{\epsilon, s}_{G_q} ( D ) & \geq & R^\epsilon_{\overline{a}} ( D ) - s W_G^\epsilon  ( D ) - O ( q + \epsilon/q ). \qed
\end{eqnarray}
\end{proposition}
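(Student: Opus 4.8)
The plan is simply to chain the two preceding propositions, since the target inequality is their transitive composition. First I would invoke Proposition~\ref{qprop1}, which gives
\begin{eqnarray}
R_{G_q}^{\epsilon,s} ( D ) & \geq & R_{G_q}^{\epsilon} ( D ) - s W_{G}^{\epsilon} ( D ) - O ( \epsilon /q),
\end{eqnarray}
valid for all compatible devices $D$, all $q \in (0,1)$, and all $\epsilon \in (0,1]$. This already isolates the weighting term $s W_G^\epsilon(D)$ exactly as it appears in the conclusion, and reduces the problem to lower-bounding the unweighted quantity $R_{G_q}^\epsilon(D)$.

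Next I would substitute the bound from Proposition~\ref{qprop2}, namely $R_{G_q}^\epsilon ( D ) \geq R_{\overline{a}}^\epsilon ( D ) - O ( q )$. Because the inequality from Proposition~\ref{qprop1} is monotone in the $R_{G_q}^\epsilon(D)$ term (it appears with a positive coefficient), replacing it by its lower bound preserves the direction of the inequality. This yields
\begin{eqnarray}
R_{G_q}^{\epsilon,s} ( D ) & \geq & R_{\overline{a}}^{\epsilon} ( D ) - s W_{G}^{\epsilon} ( D ) - O ( \epsilon /q) - O ( q ).
\end{eqnarray}

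Finally I would absorb the two error terms, using $O(\epsilon/q) + O(q) = O(q + \epsilon/q)$, to arrive at the stated bound $R^{\epsilon, s}_{G_q} ( D ) \geq R^\epsilon_{\overline{a}} ( D ) - s W_G^\epsilon ( D ) - O ( q + \epsilon/q )$. The only point requiring any care is to confirm that the composition respects the device-independence convention of Section~\ref{defsec}: both Proposition~\ref{qprop1} and Proposition~\ref{qprop2} have big-$O$ constants independent of $D$ (indeed independent of all the quantified parameters except through the explicit $\epsilon$ and $q$ dependence), so their sum inherits the same uniformity. There is no genuine obstacle here; the work was already carried out in the two component propositions, and this statement merely records their combination for use in the subsequent security argument.
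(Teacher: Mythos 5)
Your proposal is correct and is exactly the paper's argument: the paper states this proposition with the preamble ``Combining Propositions~\ref{qprop1} and \ref{qprop2},'' i.e., its proof is precisely the chaining of the weighted-to-unweighted bound with the $R_{\overline{a}}^\epsilon$ lower bound and the absorption of the two error terms into $O(q + \epsilon/q)$. Your added remark about uniformity of the big-$O$ constants over devices is a fine observation, consistent with the paper's device-independence conventions.
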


The next proposition now follows by the same reasoning that was
used to prove Proposition~\ref{devindboundprop}.

\begin{proposition}
\label{weightedkeyprop}
Let $G$ be a  game, and let
$\pi$ be a rate curve for $(G, \overline{a})$.  Then,
for all compatible devices $D$, and all $r \in ( 0, W_G )$, $q \in (0, 1)$
and $\epsilon \in (0, 1 ]$,
\begin{eqnarray}
R_{G_q}^{\epsilon, \pi' ( r ) } ( D ) & \geq & \pi ( r ) - \pi' (r ) r - O ( q + \epsilon/q).  \qed
\end{eqnarray}
\end{proposition}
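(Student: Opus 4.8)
The plan is to mirror the proof of Proposition~\ref{devindboundprop} line for line, with two substitutions: in place of Proposition~\ref{devrelnprop} I would use the (unnumbered) bound obtained just above by combining Propositions~\ref{qprop1} and~\ref{qprop2}, and I would invoke that $\pi$ is a rate curve for $(G,\overline{a})$ rather than for $G$.

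First I would fix a compatible device $D$ and a value $r \in (0, W_G)$, and set the weighting parameter to $s = \pi'(r)$. The combined bound from Propositions~\ref{qprop1} and~\ref{qprop2} then gives
\begin{eqnarray}
R_{G_q}^{\epsilon, \pi'(r)}(D) \;\geq\; R_{\overline{a}}^{\epsilon}(D) - \pi'(r)\, W_G^{\epsilon}(D) - O(q + \epsilon/q).
\end{eqnarray}
Because $\rho_{\overline{a}}^x$ and $\phi_{\overline{a}}^x$ share singular values, the quantity $R_{\overline{a}}^{\epsilon}(D)$ is exactly the left-hand side appearing in the definition of a rate curve for $(G,\overline{a})$; hence $R_{\overline{a}}^{\epsilon}(D) \geq \pi(W_G^{\epsilon}(D)) - O(\epsilon)$, and (since $\epsilon \leq \epsilon/q$ for $q \in (0,1)$) this error is absorbed into $O(q + \epsilon/q)$. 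Substituting yields
\begin{eqnarray}
R_{G_q}^{\epsilon, \pi'(r)}(D) \;\geq\; \pi(W_G^{\epsilon}(D)) - \pi'(r)\, W_G^{\epsilon}(D) - O(q + \epsilon/q).
\end{eqnarray}

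The final step is the convex-minimization argument already used for Proposition~\ref{devindboundprop}. I would bound the right-hand side below by $\min_{t \in [0, W_G]}\left[ \pi(t) - \pi'(r)\, t \right]$ and then identify the minimizer. Since $\pi$ is convex and differentiable on $(0, W_G)$, the minimand has derivative $\pi'(t) - \pi'(r)$, which vanishes at $t = r$ and is nonpositive for $t < r$ and nonnegative for $t > r$ by monotonicity of $\pi'$; thus the minimum over $[0, W_G]$ is attained at the interior point $t = r$ and equals $\pi(r) - \pi'(r)\, r$. This is exactly the asserted bound.

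I do not expect a serious obstacle here: the proof is a routine chaining of the two input inequalities. The only point meriting care is the last step, where I must know that $W_G^{\epsilon}(D)$ lies in the domain $[0, W_G]$ on which $\pi$ is defined and convex, and that the hypothesis $r \in (0, W_G)$ makes $t = r$ an interior critical point rather than a boundary one. Both are handled by the convexity bookkeeping established before Proposition~\ref{devindboundprop}, and the various $O(\epsilon)$, $O(q)$, and $O(\epsilon/q)$ error terms combine cleanly into the single $O(q + \epsilon/q)$ appearing in the statement.
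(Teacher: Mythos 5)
Your proposal is correct and is exactly the paper's intended argument: the paper simply remarks that the proposition ``follows by the same reasoning that was used to prove Proposition~\ref{devindboundprop},'' i.e., apply the combined bound from Propositions~\ref{qprop1} and~\ref{qprop2} with $s = \pi'(r)$, invoke the rate curve for $(G,\overline{a})$, and finish with the convex-minimization step at the interior critical point $t = r$. Your additional remarks (the $\phi$--$\rho$ singular-value identification and the absorption of $O(\epsilon)$ into $O(q + \epsilon/q)$) are exactly the bookkeeping the paper leaves implicit.
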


\subsection{The randomness of the success state of Protocol $R_{gen}$}

Let $E$ be a quantum system which purifies the device $D$ at the beginning
of Protocol $R_{gen}$.  
We use the following notation: let $X$ denote a classical register
containing the output sequence $(x_1, \ldots, x_N)$
at the conclusion of Protocol $R_{gen}$, let $I$ denote a classical
register containing the sequence \[ ((t_1, a_1  ) , \ldots , ( 
t_N , a_N )), \] and let $\Gamma_{IXE}$ denote the
joint state of $IXE$ at the conclusion of Protocol $R_{gen}$.
Let $s \subseteq \mathcal{I}^N \times \mathcal{X}^N$ denote the event that Protocol $R_{gen}$ succeeds.  We can
write the corresponding state as
\begin{eqnarray}
\Gamma_{IXE}^s & = & \sum_{( \mathbf{i}, \mathbf{x} ) 
\in s }  \left|
\mathbf{ix } \right> \left< \mathbf{ix} \right| 
\otimes \rho_{\mathbf{i}}^{\mathbf{x}}.
\end{eqnarray}

The next theorem addresses the $(1+\epsilon)$-randomness of the success state
of Protocol $R_{gen}$.
\begin{theorem}
\label{succstatethm}
Fix a  game $G$ and a rate
curve $\pi$ for $(G, \overline{a})$.
Then, 
\begin{eqnarray}
-\frac{1}{\epsilon}
\log \left( \frac{ \sum_{( \mathbf{i}, \mathbf{x} ) \in s } p_q ( \mathbf{i} ) \left< \rho_{\mathbf{i}}^{\mathbf{x}} \right>_{1+\epsilon}
}{\left< \rho \right>_{1+\epsilon}} \right) \geq
N \left[  \pi ( \chi ) - O ( q + \epsilon / q ) \right].
\end{eqnarray}
\end{theorem}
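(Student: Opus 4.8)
The plan is to convert the ``success'' constraint into a geometric weight and then run an $N$-fold induction driven by the single-round bound of Proposition~\ref{weightedkeyprop}. First I would restate the success event $s$ in the language of the modified game $G_q$: a generation round contributes $0$ to the accumulated score $c$ while a game round contributes $H(a_j,x_j)$, and since $H_q$ multiplies each game-round score by $1/q$, the acceptance threshold $c \geq \chi q N$ is precisely $\sum_{j} H_q(i_j,x_j) \geq \chi N$. I then set the weighting parameter $s_0 := \pi'(\chi)$, which is nonnegative because $\pi$ is nondecreasing (and well-defined since $\pi$ is differentiable on $(0,W_G)$, the regime in which $\chi$ lies). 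On every sequence in $s$ we therefore have $2^{\epsilon s_0 ( \sum_j H_q(i_j,x_j) - \chi N )} \geq 1$, and since each summand $p_q(\mathbf{i}) \left< \rho_{\mathbf{i}}^{\mathbf{x}} \right>_{1+\epsilon}$ is nonnegative, I may first insert this weight and then extend the sum from $s$ to all sequences:
\begin{eqnarray}
\sum_{(\mathbf{i},\mathbf{x}) \in s} p_q(\mathbf{i}) \left< \rho_{\mathbf{i}}^{\mathbf{x}} \right>_{1+\epsilon} \leq 2^{-\epsilon s_0 \chi N} \sum_{\mathbf{i},\mathbf{x}} p_q(\mathbf{i}) 2^{\epsilon s_0 \sum_{j=1}^{N} H_q(i_j,x_j)} \left< \rho_{\mathbf{i}}^{\mathbf{x}} \right>_{1+\epsilon}.
\end{eqnarray}

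Next I would control the remaining $N$-round weighted partition function, call it $Z_N$, by induction on the number of rounds. Writing $Z_k$ for the analogous sum restricted to the first $k$ rounds (so $Z_0 = \left< \rho \right>_{1+\epsilon}$), the key observation is that for each fixed history $(\mathbf{i}_{<k},\mathbf{x}_{<k})$ the state $\rho_{\mathbf{i}_{\leq k}}^{\mathbf{x}_{\leq k}}$ is obtained from a conditional compatible device whose initial operator is the partial state after $k-1$ rounds; because $\rho_{\mathbf{i}}^{\mathbf{x}}$ and $\phi_{\mathbf{a}}^{\mathbf{x}}$ share singular values, the value of $\left< \cdot \right>_{1+\epsilon}$ is unchanged under this reinterpretation. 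Applying Proposition~\ref{weightedkeyprop} to each conditional device (with $r=\chi$ and $s=\pi'(\chi)$) gives the \emph{same} contraction factor $\lambda = 2^{-\epsilon[ \pi(\chi) - \pi'(\chi)\chi - O(q + \epsilon/q)]}$ for every history, precisely because that bound is device-independent. Using $p_q(\mathbf{i}_{\leq k}) = p_q(\mathbf{i}_{<k}) p_q(i_k)$ together with $2^{\epsilon s_0 \sum_{j\leq k} H_q} = 2^{\epsilon s_0 \sum_{j<k} H_q} \cdot 2^{\epsilon s_0 H_q(i_k,x_k)}$, I would sum the single-round bound against the weight of the past to obtain $Z_k \leq \lambda Z_{k-1}$, and hence $Z_N \leq \lambda^N \left< \rho \right>_{1+\epsilon}$.

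Combining the two estimates yields
\begin{eqnarray}
\frac{\sum_{(\mathbf{i},\mathbf{x}) \in s} p_q(\mathbf{i}) \left< \rho_{\mathbf{i}}^{\mathbf{x}} \right>_{1+\epsilon}}{\left< \rho \right>_{1+\epsilon}} \leq 2^{-\epsilon s_0 \chi N} \lambda^N,
\end{eqnarray}
and applying $-\frac{1}{\epsilon}\log(\cdot)$ produces the lower bound $s_0 \chi N + N[\pi(\chi) - \pi'(\chi)\chi - O(q+\epsilon/q)]$. With the choice $s_0 = \pi'(\chi)$ the two terms $\pi'(\chi)\chi$ cancel, leaving exactly $N[\pi(\chi) - O(q+\epsilon/q)]$, which is the claimed inequality.

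I expect the delicate step to be the inductive passage $Z_k \leq \lambda Z_{k-1}$. It hinges on reinterpreting each truncated history as a genuine compatible abstract device, so that the device-independent bound of Proposition~\ref{weightedkeyprop} applies uniformly with one and the same contraction factor, and on verifying that the per-round input probability and the exponential weight multiply cleanly across rounds. One must also confirm that the $N$-fold accumulation of the per-round error leaves $N\cdot O(q+\epsilon/q)$ \emph{inside} the bracket rather than degrading the leading rate $\pi(\chi)$ itself, which is exactly the form asserted in the theorem.
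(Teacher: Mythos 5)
Your proposal is correct and takes essentially the same route as the paper's proof: both bound the $\pi'(\chi)$-weighted partition function over \emph{all} input/output sequences by applying Proposition~\ref{weightedkeyprop} inductively over the $N$ rounds, then exploit that $2^{\epsilon \pi'(\chi)\left( H_q(\mathbf{i},\mathbf{x}) - \chi N \right)} \geq 1$ on the success event to pass to the sum over $s$, with the $\pi'(\chi)\chi$ terms cancelling exactly as you describe. The only differences are presentational --- you insert the weight and extend the sum before invoking the inductive bound (the paper does these steps in the opposite order), and you spell out the conditional-device induction that the paper compresses into the phrase ``Proposition~\ref{weightedkeyprop} with induction on $N$.''
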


\begin{proof}
Applying Proposition~\ref{weightedkeyprop} with induction on $N$, we find that
\begin{eqnarray}
-\frac{1}{\epsilon} \log \left( \frac{ \sum_{\mathbf{i}, \mathbf{x}} p_q ( \mathbf{i} )
2^{ \epsilon \pi' ( \chi ) H_q ( \mathbf{a} , \mathbf{ x} ) }
\left< \rho_{\mathbf{i}}^{\mathbf{x}} \right>_{1+\epsilon} }{\left< \rho \right>_{1+\epsilon}} \right) \geq N [ \pi ( \chi ) - \pi' ( \chi ) \chi - O ( q + \epsilon / q ) ].
\end{eqnarray}
The event $s$ is determined by the inequality $H_q ( \mathbf{i} , \mathbf{x} ) \geq \chi N$.
The inequality above is thus preserved when we drop all terms corresponding to 
$(\mathbf{i} , \mathbf{x} ) \notin s$ from the summation, and in those that remain,
replace
the subterm $H_q ( \mathbf{i} , \mathbf{x} )$ with $\chi N$:
\begin{eqnarray}
-\frac{1}{\epsilon} \log \left( \frac{ \sum_{(\mathbf{i}, \mathbf{x})
\in s} p_q ( \mathbf{i} )
2^{ \epsilon \pi' ( \chi ) \chi N }
\left< \rho_{\mathbf{i}}^{\mathbf{x}} \right>_{1+\epsilon} }{\left< \rho \right>_{1+\epsilon}} \right) \geq N [ \pi ( \chi ) - \pi' ( \chi ) \chi - O ( q + \epsilon / q ) ].
\end{eqnarray}
Adding $N \pi' ( \chi ) \chi$ to both sides yields the desired result.
\end{proof}

\subsection{Extractable bits}

\label{extractablesubsec}

Our goal is now to use the content of subsection~\ref{smminsubsec} to 
prove a lower bound on the extractable bits in the output of Protocol $R_{gen}$.
First we will obtain a randomness lower bound for adversary states in the form shown on the right side of
(\ref{expabove}).
This is done by applying Theorem~\ref{succstatethm} to a class of modified devices.

\begin{proposition}
\label{succstateprop}
Fix a  game $G$ and a rate
curve $\pi$ for $(G, \overline{a})$.
Then, 
\begin{eqnarray}
\label{therelbound}
-\frac{1}{\epsilon}
\log \left(  \sum_{( \mathbf{i}, \mathbf{x} ) \in s } p_q ( \mathbf{i} ) \left< 
\rho^{\frac{-\epsilon}{2 + 2 \epsilon}} 
\rho_{\mathbf{i}}^{\mathbf{x}} \rho^{\frac{-\epsilon}{2 + 2 \epsilon}} \right>_{1+\epsilon}
 \right) \geq
N \left[  \pi ( \chi ) - O ( q + \epsilon / q ) \right].
\end{eqnarray}
\end{proposition}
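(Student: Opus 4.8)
The plan is to derive this directly from Theorem~\ref{succstatethm}, applied not to $D$ itself but to a re-weighted device. Writing $\beta = \frac{\epsilon}{2+2\epsilon}$, the conjugating factors $\rho^{\frac{-\epsilon}{2+2\epsilon}} = \rho^{-\beta}$ that appear in the proposition but not in Theorem~\ref{succstatethm} should be absorbable into the initial state. Since $1/(1+\epsilon) = 1-2\beta$, I would introduce the abstract device $\tilde D = (Q, \tilde\phi, \{P_a^x\}, \{U_a\})$ obtained from $D$ by replacing only its initial state, setting $\tilde\phi := \phi^{1/(1+\epsilon)}$ and leaving all measurements and unitaries intact. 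The key point is that $\sqrt{\tilde\phi} = \phi^{1/2 - \beta} = \sqrt\phi\,\phi^{-\beta}$ (negative powers read as pseudo-inverses on $\Supp\phi$), so the factor $\phi^{-\beta}$ gets distributed to the two ends of the string of measurement operators.

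Carrying this out, with $M_j = U_{a_j}P_{a_j}^{x_j}$ as in Section~\ref{defsec}, the adversary states of $\tilde D$ satisfy
\begin{eqnarray}
\tilde\rho_{\mathbf{i}}^{\mathbf{x}}
& = & \left( \sqrt{\tilde\phi}\, M_1^* \cdots M_n^* M_n \cdots M_1 \sqrt{\tilde\phi} \right)^\top \\
& = & \left( \phi^{-\beta}\sqrt\phi\, M_1^* \cdots M_n^* M_n \cdots M_1 \sqrt\phi\,\phi^{-\beta} \right)^\top \\
& = & \rho^{-\beta}\, \rho_{\mathbf{i}}^{\mathbf{x}}\, \rho^{-\beta},
\end{eqnarray}
where the last equality uses $(\phi^{-\beta})^\top = (\phi^\top)^{-\beta} = \rho^{-\beta}$ together with the definition of $\rho_{\mathbf{i}}^{\mathbf{x}}$; the pseudo-inverses cause no trouble because every $\rho_{\mathbf{i}}^{\mathbf{x}}$ is supported inside $\Supp\rho$. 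This is exactly the conjugation appearing in the proposition.

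It then remains only to track the normalization. The base adversary state of $\tilde D$ is $\tilde\rho = \tilde\phi^\top = \rho^{1/(1+\epsilon)}$, so the denominator in Theorem~\ref{succstatethm}, applied to $\tilde D$, collapses:
\begin{eqnarray}
\left< \tilde\rho \right>_{1+\epsilon} = \Tr\left[ \left(\rho^{1/(1+\epsilon)}\right)^{1+\epsilon} \right] = \Tr[\rho] = \Tr[\phi] = 1,
\end{eqnarray}
since $\phi$ is a genuine density operator. The game $G$, the rate curve $\pi$, and the success event $s$ depend only on the measurements and on the combinatorics of the sequences $(\mathbf{i}, \mathbf{x})$, not on the initial state, so they are unchanged in passing from $D$ to $\tilde D$; moreover Theorem~\ref{succstatethm} is scale-invariant in $\rho$ and hence valid for the abstract device $\tilde D$. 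It therefore applies verbatim and yields exactly (\ref{therelbound}). I expect the only delicate point to be the algebraic identity above --- recognizing that the single substitution $\phi \mapsto \phi^{1/(1+\epsilon)}$ is precisely what splits the weight $\rho^{-\beta}$ symmetrically onto both ends of each adversary state --- after which the collapse of the denominator to $\Tr[\rho] = 1$ and the invariance of $s$ make the reduction immediate.
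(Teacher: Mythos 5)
Your proposal is correct and is essentially identical to the paper's own proof: the paper likewise applies Theorem~\ref{succstatethm} to the abstract device $D_\epsilon$ obtained by replacing $\phi$ with $\phi^{1/(1+\epsilon)}$, identifies its adversary states as $\rho^{-\epsilon/(2+2\epsilon)}\,\rho_{\mathbf{i}}^{\mathbf{x}}\,\rho^{-\epsilon/(2+2\epsilon)}$, and uses the observation $\left< \rho^{1/(1+\epsilon)} \right>_{1+\epsilon} = \Tr(\rho) = 1$ to collapse the normalizing denominator. You have merely written out the supporting algebra (the splitting $\phi^{1/(2+2\epsilon)} = \sqrt{\phi}\,\phi^{-\beta}$ and the behavior under transpose) in more explicit detail than the paper does.
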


\begin{proof}
Let $D_\epsilon$ be the abstract device that arises from $D$ by replacing the initial state
$\phi$ with the operator $\phi^{\frac{1}{1+\epsilon}}$.  Then, (using the notation 
of subsection~\ref{devmodelsubsec})
 the adversary state of
$D_\epsilon$ for any $\mathbf{i}, \mathbf{x}$ is isomorphic to
\begin{eqnarray}
\label{subnormexp}
\left( \phi^{\frac{1}{2+2\epsilon}} M_1^* M_2^* \cdots M_n^*
M_n \cdots M_2 M_1 \phi^{\frac{1}{2+2\epsilon}} \right)^\top & = & 
\rho^{\frac{-\epsilon}{2+2\epsilon}} \rho_{\mathbf{i}}^{\mathbf{x}} 
\rho^{\frac{-\epsilon}{2+2\epsilon}},
\end{eqnarray}
where $M_j = U_{a_j} P_{a_j}^{x_j}$.  Applying Theorem~\ref{succstatethm}
to $D_\epsilon$ (using the observation that $\left< \rho^{1/(1+\epsilon)} \right>_{1+\epsilon}
= \Tr ( \rho ) = 1$) yields the desired result.
\end{proof}

We note that the quantity on the lefthand side of (\ref{therelbound}) can be
expressed in the form of (\ref{expabove}), with $\Lambda = \Gamma^s_{XIE}$
and
\begin{eqnarray}
\sigma = \left( \sum_{\mathbf{i}} p_q ( \mathbf{i} ) \left| \mathbf{i}
\right> \left< \mathbf{i} \right| \right) \otimes \rho.
\end{eqnarray}
We are now ready to prove our main security results.

\begin{theorem}
\label{acentralsecthm}
Fix a  game $G$ and a rate curve
$\pi$ for $(G, \overline{a})$.  Then, for any $\delta > 0$,
\begin{eqnarray}
\frac{H^{\delta}_{min} \left( X \mid IE \right)_{\Gamma^s}}{N} \geq
\pi ( \chi ) - O \left(  q + \sqrt{ \frac{\log ( 2/\delta^2 ) }{qN}} \right).
\end{eqnarray}
\end{theorem}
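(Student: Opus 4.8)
The plan is to feed Proposition~\ref{succstateprop} into Theorem~\ref{translatetominthm} and then optimize the free parameter $\epsilon$. First I would invoke the observation recorded immediately after Proposition~\ref{succstateprop}: the quantity on the left of (\ref{therelbound}) is precisely the expression $K$ of (\ref{expabove}) for the choices $\Lambda = \Gamma^s_{XIE}$ and $\sigma = \left( \sum_{\mathbf{i}} p_q ( \mathbf{i} ) \left| \mathbf{i} \right> \left< \mathbf{i} \right| \right) \otimes \rho$. Since $\sum_{\mathbf{i}} p_q ( \mathbf{i} ) = 1$ and $\Tr ( \rho ) = 1$, this $\sigma$ is a genuine density matrix on $IE$, so Theorem~\ref{translatetominthm} applies with $C = X$ and $Q = IE$. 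Combining its conclusion (\ref{thethm}) with the lower bound $K \geq N [ \pi ( \chi ) - O ( q + \epsilon/q ) ]$ supplied by Proposition~\ref{succstateprop}, and using the identity $1 + 2 \log ( 1/\delta ) = \log ( 2/\delta^2 )$, I obtain
\begin{eqnarray}
H^\delta_{min} \left( X \mid IE \right)_{\Gamma^s} & \geq & N \left[ \pi ( \chi ) - O ( q + \epsilon / q ) \right] - \frac{ \log ( 2 / \delta^2 ) }{\epsilon}.
\end{eqnarray}

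Dividing through by $N$ leaves a bound of the form
\begin{eqnarray}
\frac{ H^\delta_{min} \left( X \mid IE \right)_{\Gamma^s} }{N} & \geq & \pi ( \chi ) - O ( q ) - O \left( \frac{\epsilon}{q} \right) - \frac{ \log ( 2 / \delta^2 ) }{N \epsilon},
\end{eqnarray}
in which $\epsilon \in (0,1]$ is still free and, crucially, every implied constant is device-independent, since each is inherited either from Proposition~\ref{succstateprop} or from the explicit constant $1$ in Theorem~\ref{translatetominthm}. The two $\epsilon$-dependent error terms trade off against each other: $\epsilon/q$ increases with $\epsilon$ while $\log(2/\delta^2)/(N\epsilon)$ decreases. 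Balancing them by the AM--GM inequality, I would set $\epsilon = \sqrt{ q \log ( 2/\delta^2 ) / N }$, which makes both terms equal to $\sqrt{ \log ( 2/\delta^2 ) / ( q N ) }$ and yields exactly the claimed estimate $\pi ( \chi ) - O \left( q + \sqrt{ \log ( 2/\delta^2 ) / ( qN ) } \right)$.

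The one point requiring care — and the main (minor) obstacle — is the constraint $\epsilon \leq 1$, which the balancing choice may violate when $q \log ( 2/\delta^2 )$ is large relative to $N$. In that complementary regime I would argue that the target bound is trivial: if $\sqrt{ q \log ( 2/\delta^2 ) / N } > 1$, then $\sqrt{ \log ( 2/\delta^2 ) / ( qN ) } > 1/q \geq 1$, so for a suitable choice of the implied constant the entire right-hand side $\pi ( \chi ) - O \left( q + \sqrt{ \log ( 2/\delta^2 ) / ( qN ) } \right)$ is non-positive, and (since smoothing to the zero operator is permitted when $\Tr ( \Gamma^s )$ is small, and otherwise $H^\delta_{min}$ is non-negative) the inequality holds automatically. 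Everything else is routine: the only genuinely substantive inputs are Proposition~\ref{succstateprop} and Theorem~\ref{translatetominthm}, and the remaining work is the elementary one-parameter optimization together with the bookkeeping that keeps all constants independent of $D$.
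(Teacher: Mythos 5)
Your proposal is correct and follows essentially the same route as the paper's own proof: combine Proposition~\ref{succstateprop} with Theorem~\ref{translatetominthm} (via the $\Lambda$, $\sigma$ identification noted after the proposition) and then balance the error terms $\epsilon/q$ and $\log(2/\delta^2)/(N\epsilon)$, which is exactly how the paper proceeds, setting $\epsilon = \min\{1, \sqrt{q\log(2/\delta^2)/N}\}$. If anything, your explicit treatment of the regime where the balancing choice would exceed $1$ (arguing the bound is vacuous there since $\pi(\chi)$ is bounded and $H^{\delta}_{min} \geq 0$) is more careful than the paper, which absorbs that case silently into the $\min$.
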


\begin{proof}
Combining Theorem~\ref{translatetominthm} and Proposition~\ref{succstateprop}
yields
\begin{eqnarray}
\label{stepforward}
\frac{H^{\delta}_{min} \left( X \mid IE \right)_{\Gamma^s}}{N} & \geq &
 \pi ( \chi ) - O \left(  q + \epsilon/q + \frac{ \log ( 2 / \delta^2 ) }{N \epsilon} \right).
\end{eqnarray}
for all $\epsilon \in (0, 1 ]$.  We can minimize the big-$O$ expression in 
(\ref{stepforward}) by setting $\epsilon$ so that the second
and third terms become equal.  Let
\begin{eqnarray}
\epsilon = \min \left\{ 1, \sqrt{\frac{ q \log ( 2 / \delta^2 ) }{N}} \right\},
\end{eqnarray}
and the desired result holds.
\end{proof}

\begin{corollary}
Fix a  game $G$ and a rate curve
$\pi$ for $(G, \overline{a})$.  Then, for any $b > 0$,
\begin{eqnarray}
\label{finalcorexp}
\frac{H^{\sqrt{2} \cdot 2^{-bqN}}_{min} \left( X \mid IE \right)_{\Gamma^s}}{N} \geq
\pi ( \chi ) - o ( 1 ),
\end{eqnarray}
where $o(1)$ denotes a function of $(q, b)$ that tends to zero as $(q, b ) \to (0, 0)$.
\end{corollary}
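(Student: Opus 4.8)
The plan is to obtain the corollary as an immediate specialization of Theorem~\ref{acentralsecthm}, in which the free smoothing parameter $\delta$ is set to the specific value $\sqrt{2} \cdot 2^{-bqN}$. Theorem~\ref{acentralsecthm} already furnishes, for every $\delta > 0$, the bound
\[
\frac{H^{\delta}_{min} \left( X \mid IE \right)_{\Gamma^s}}{N} \geq \pi ( \chi ) - O\left( q + \sqrt{\frac{\log ( 2/\delta^2 )}{qN}} \right),
\]
so all that remains is to check that this particular choice of $\delta$ makes the error term collapse into a clean function of $(q, b)$ that vanishes in the appropriate limit.

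First I would substitute and simplify the logarithm. With $\delta = \sqrt{2} \cdot 2^{-bqN}$ we have $\delta^2 = 2 \cdot 2^{-2bqN}$, so that $2/\delta^2 = 2^{2bqN}$ and hence $\log ( 2/\delta^2 ) = 2bqN$ (using base-$2$ logarithms, as throughout the paper). The square-root term in the error then becomes
\[
\sqrt{\frac{\log ( 2/\delta^2 )}{qN}} = \sqrt{\frac{2bqN}{qN}} = \sqrt{2b}.
\]
Feeding this back, the entire error term reduces to $O( q + \sqrt{2b} ) = O( q + \sqrt{b} )$, which is manifestly a function of $(q, b)$ alone --- with no surviving dependence on $N$, on the auxiliary parameter $\epsilon$ that was optimized inside Theorem~\ref{acentralsecthm}, or on the device $D$ --- and which tends to $0$ as $(q, b) \to (0, 0)$. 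This is precisely the $o(1)$ quantity asserted in (\ref{finalcorexp}), so the corollary follows.

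I do not anticipate a genuine obstacle here: this step is a packaging of Theorem~\ref{acentralsecthm} rather than a new mathematical claim, and all of the substantive work --- the rate-curve inequality, the induction over the $N$ rounds in Theorem~\ref{succstatethm}, and the passage from Renyi divergence to smooth min-entropy via Theorem~\ref{translatetominthm} --- has already been discharged upstream. The only points demanding care are bookkeeping ones: confirming that the base-$2$ convention makes $\log ( 2/\delta^2 )$ exactly $2bqN$, so that the factor $\sqrt{2}$ in $\delta$ and the constant $2$ inside the logarithm conspire to cancel and leave the tidy $\sqrt{2b}$; and confirming that the device-independence of the error term established in Theorem~\ref{acentralsecthm} is inherited, so that the resulting $o(1)$ depends on $(q, b)$ but on nothing else. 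These are the same conventions maintained consistently in Section~\ref{protsec}, so verifying them is routine.
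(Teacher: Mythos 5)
Your proposal is correct and matches the paper's own proof, which simply states that the corollary "is immediate from Theorem~\ref{acentralsecthm} by substitution"; your substitution $\delta = \sqrt{2}\cdot 2^{-bqN}$, yielding $\log(2/\delta^2) = 2bqN$ and hence an error term $O(q + \sqrt{2b})$, is exactly the intended computation. You have merely written out the bookkeeping the paper leaves implicit.
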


\begin{proof}
This is immediate from Theorem~\ref{acentralsecthm} by substitution.
\end{proof}

For the final security statement, we use the terminology of extractable
bits discussed in subsection~\ref{centralresultsubsec}.

\begin{corollary}
\label{finalcor}
Fix a  game $G$ and a rate curve
$\pi$ for $(G, \overline{a})$.
Then for any $b > 0$, Protocol $R_{gen}$ produces at least
\begin{eqnarray}
N [ \pi ( \chi ) - o ( 1 ) ]
\end{eqnarray}
extractable bits with soundness error $3 \cdot 2^{-bqN}$, where $o(1)$ denotes
a function of $(q,b)$ (which can depend on $G, \pi$) that vanishes
as $(q, b ) \to ( 0, 0 )$.
\end{corollary}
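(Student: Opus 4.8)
The plan is to derive Corollary~\ref{finalcor} as an essentially immediate consequence of the preceding corollary, translating its smooth min-entropy bound into the language of extractable bits introduced in subsection~\ref{centralresultsubsec}. Recall that, by definition, Protocol $R_{gen}$ produces $y$ extractable bits with soundness error $\delta$ precisely when the success state $\Gamma^s$ is within trace-distance $\delta$ of an operator satisfying $H_{min}(X \mid ATE) \geq y$, or within trace-distance $\delta$ of the zero state. So the task reduces to converting the \emph{smooth} min-entropy bound $H^{\sqrt{2}\cdot 2^{-bqN}}_{min}(X \mid IE)_{\Gamma^s} \geq N[\pi(\chi) - o(1)]$ from the previous corollary into a statement about an operator nearby in trace-distance having (non-smooth) min-entropy at least this value, while accounting for the mismatch between the smoothing parameter $\sqrt{2}\cdot 2^{-bqN}$ and the claimed soundness error $3 \cdot 2^{-bqN}$.

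First I would unpack the definition of smooth min-entropy: the bound $H^{\delta}_{min}(X \mid IE)_{\Gamma^s} \geq y$ means by definition that there exists a classical-quantum operator $\Gamma'$ with $\|\Gamma' - \Gamma^s\|_1 \leq \delta$ and $H_{min}(X \mid IE)_{\Gamma'} \geq y$. This $\Gamma'$ is exactly the nearby operator required in the definition of extractable bits. Setting $y = N[\pi(\chi) - o(1)]$ and $\delta = \sqrt{2}\cdot 2^{-bqN}$, the previous corollary therefore hands us the witness $\Gamma'$ directly. The only gap is the soundness error bookkeeping. I would note that the register $I$ in the theorem plays the role of the combined classical data $AT$ (the inputs and test bits) in the protocol description, so that conditioning on $IE$ matches conditioning on $ATE$; this identification is harmless and follows from the notation set up before Theorem~\ref{succstatethm}.

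The remaining step is to reconcile the soundness error. Since $\sqrt{2} < 3$, we have $\sqrt{2}\cdot 2^{-bqN} \leq 3 \cdot 2^{-bqN}$, and smooth min-entropy is monotone nondecreasing in the smoothing parameter, so the bound of the previous corollary holds a fortiori with smoothing parameter $3 \cdot 2^{-bqN}$. (The extra slack in the constant $3$ versus $\sqrt{2}$ also comfortably absorbs the alternative ``within trace-distance $\delta$ of the zero state'' clause, which governs the abort case where $\Tr(\Gamma^s)$ is tiny.) I expect this to be entirely routine; there is no real obstacle, as all the analytic work has already been carried out in Theorem~\ref{acentralsecthm} and its corollary. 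The only point requiring a sentence of care is verifying that the $o(1)$ term here, a function of $(q,b)$ that may depend on $G$ and $\pi$, is literally the same vanishing quantity produced in the previous corollary, so that the stated limit as $(q,b) \to (0,0)$ is inherited without modification.
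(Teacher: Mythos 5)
Your reduction misses the one step that is actually the substance of the paper's proof: the mismatch between the \emph{subnormalized} operator handed to you by the definition of smooth min-entropy and the \emph{normalized} operator demanded by the definition of extractable bits. Unpacking $H^{\delta}_{min}(X \mid IE)_{\Gamma^s} \geq y$ does give an operator $\Gamma'$ with $\| \Gamma' - \Gamma^s \|_1 \leq \delta$ and $H_{min}(X \mid IE)_{\Gamma'} \geq y$; but the definition in subsection~\ref{centralresultsubsec} requires that the \emph{normalization} of the nearby operator satisfy the min-entropy bound. Rescaling shifts conditional min-entropy: $H_{min}(X \mid IE)_{\Gamma'/\Tr(\Gamma')} = H_{min}(X \mid IE)_{\Gamma'} + \log \Tr(\Gamma')$, and a priori $\Tr(\Gamma')$ could be exponentially small in $N$ (say of order $2^{-N}$), in which case the correction term $\log \Tr(\Gamma')$ destroys the entire linear-in-$N$ bound. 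So your claim that ``this $\Gamma'$ is exactly the nearby operator required in the definition of extractable bits'' does not go through as stated, and the proof is not the routine relabeling you describe.

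This is also why your explanation of the constant $3$ (monotonicity in the smoothing parameter) is off target. The paper's proof runs a case analysis whose purpose is to control $\log \Tr(\Gamma')$. If $\Tr(\Gamma^s) \leq 3 \cdot 2^{-bqN}$, then $\Gamma^s$ is within trace distance $3 \cdot 2^{-bqN}$ of the zero state and the soundness claim holds trivially by the zero-state clause. Otherwise, $\Tr(\Gamma') \geq \Tr(\Gamma^s) - \sqrt{2}\cdot 2^{-bqN} > (3 - \sqrt{2})\, 2^{-bqN} > 2^{-bqN}$, whence $\frac{1}{N}\log \Tr(\Gamma') > -bq$, a quantity that vanishes as $(q,b) \to (0,0)$ and is therefore absorbed into the $o(1)$ term. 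The constant $3$ is chosen precisely so that $3 - \sqrt{2} > 1$ makes this lower bound on $\Tr(\Gamma')$ available; it has nothing to do with enlarging the smoothing parameter. Your parenthetical remark about the zero-state clause ``absorbing'' the small-$\Tr(\Gamma^s)$ case gestures at the right dichotomy, but without the normalization correction and the explicit lower bound on $\Tr(\Gamma')$, the argument is incomplete exactly where the paper's proof does its work.
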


\begin{proof}
Find a classical-quantum operator $\Gamma' \geq 0$ with $\left\| \Gamma' - \Gamma^s \right\|_1
\leq \sqrt{2} \cdot 2^{-bqN}$ whose min-entropy
$H_{min} ( X \mid IE)_{\Gamma'}$
is equal to the 
numerator in \ref{finalcorexp}.  If $\Tr (  \Gamma^s ) \leq 3 \cdot 2^{- bqN}$,
then the assertion of the corollary is trivial.  If not, then
$\Tr ( \Gamma' ) > 2^{-bqN}$, and
\begin{eqnarray}
\frac{ H_{min} \left( X \mid IE \right)_{\Gamma' / \Tr ( \Gamma') }}{N} & \geq &
\pi ( \chi ) - o ( 1 ) + \frac{\log \Tr ( \Gamma' )}{N} , \\
& \geq & \pi ( \chi ) - o ( 1 ),
\end{eqnarray}
as desired.
\end{proof}

\section{Acknowledgements}

Many thanks to Dong-Ling Deng and Kihwan Kim for sharing with us their
work on randomness expansion, and
Patrick Ion for introducing us to the literature on Kochen-Specker
inequalities.  We also thank Karl Winsor, Xiao Yuan, Qi Zhao, Zhu Cao and
Christopher Portmann for discussions that helped improve technical 
aspects of the draft.

\newpage

\appendix

\section{Not all superclassical distributions are randomness generating}

\label{magicapp}

The Magic Square game $M$ is a two player game defined as follows:
\begin{enumerate}
\item The input and output alphabets are $\{ 0, 1, 2 \}$ and $\{ 0, 1 \}^3$,
respectively for each player.

\item The input probability distribution is uniform.

\item The game is won iff the inputs $a_1$ and $a_2$ and outputs
$(x_1^{(0)}, x_1^{(1)}, x_1^{(2)} )$ and $(x_2^{(0)}, x_2^{(1)}, x_2^{(2)} )$
satisfy all of the following
\begin{eqnarray}
\label{mag1} x_1^{(0)} \oplus x_1^{(1)} \oplus x_1^{(2)} & = & 0 \\
\label{mag2} x_2^{(0)} \oplus x_2^{(1)} \oplus x_2^{(2)} & = & 1 \\
\label{mag3} x_1^{(a_1)} & = & x_2^{(a_2)}.
\end{eqnarray}
\end{enumerate}

The reason for the name ``magic square'' is this: the game is for
the first player to fill in one row of bits in a $3 \times 3$ square while
the second player is asked to fill in one column.  The requirements
are that the row has even parity, the column has odd parity, and the
overlapping bits agree.  The maximal classical score for this game is
$8/9 \approx 0.888$.

We construct a device for this game.  Let $Q = Q_1 \otimes Q_2
= \mathbb{C}^2 \otimes \mathbb{C}^2$, let $\alpha$ be the projector
of $Q$ onto the Bell state $\frac{1}{\sqrt{2}} \left( \left| 00 \right> + \left|
11 \right> \right)$.  Let the measurement strategy $\{ \{ P_{1, a}^{\mathbf{x}} \}_\mathbf{x} \}_a$ for the first player be given
by
\begin{eqnarray}
P_{1, 0}^{000} & = & \mathbb{I} \\
P_{1, 1}^{000} & = & \left| 0 \right> \left< 0 \right| \\
P_{1, 1}^{011} & = & \left| \pi/2 \right> \left< \pi/2 \right| \\
P_{1, 2}^{101} & = & \left|  -\pi/4 \right> \left< -\pi/4 \right|, \\
P_{1, 2}^{110} & = & \left|  \pi/4 \right> \left< \pi/4 \right|,
\end{eqnarray}
where we have used the notation $\left| \theta \right> =
(\cos \theta ) \left| 0 \right> + ( \sin \theta) \left| 1 \right>$.  Let
the measurement strategy for the second player be given by
\begin{eqnarray}
P_{2, 0}^{001} & = & \mathbb{I} \\
P_{2, 1}^{001} & = & \left| \pi/8 \right> \left< \pi/8 \right| \\
P_{2, 1}^{010} & = & \left| 5 \pi/8 \right> \left< 5 \pi/8 \right| \\
P_{2, 2}^{001} & = & \left|  -\pi/8 \right> \left< -\pi/8 \right|, \\
P_{2, 2}^{010} & = & \left|  3 \pi/8 \right> \left< 3\pi/8 \right|.
\end{eqnarray}
The winning probability is $1$ when the
inputs are such that $(a_1 = 0) \vee (a_2 = 0)$,
and the winning probability is $\frac{1}{2} + \frac{\sqrt{2}}{4}$ otherwise.
(In the latter case the players are essentially conducting
an optimal strategy for the CHSH game.)  Thus the average 
winning probability across all inputs is
\begin{eqnarray}
(5/9) + (4/9)\left( \frac{1}{2} + \frac{\sqrt{2}}{4} \right)
& \approx & 0.934.
\end{eqnarray}

Similarly, for any output pair $(\overline{x}_1, \overline{x}_2)$ satisfying the conditions
\begin{eqnarray}
\label{magalt1} \overline{x}_1^{(0)} \oplus \overline{x}_1^{(1)} \oplus \overline{x}_1^{(2)} & = & 0 \\
\label{magalt2} \overline{x}_2^{(0)} \oplus \overline{x}_2^{(1)} \oplus \overline{x}_2^{(2)} & = & 1 \\
\label{magalt3} \overline{x}_1^{(0)} & = & \overline{x}_2^{(0)}.
\end{eqnarray}
construct an analogous device, which we denote $E^{\overline{x}_1,\overline{x}_2}$,
which always generates output $(\overline{x}_1, \overline{x}_2)$ on input $(0,0)$
and wins with overall probability $\approx 0.934$.  
Let $E$ be a quantum device which chooses an output pair $(\overline{x}_1, \overline{x}_2 )$
according to a uniform distribution
among pairs satisfying (\ref{magalt1}--\ref{magalt3}) and then behaves
identically to $E^{\overline{x}_1,\overline{x}_2}$.  (Note that
if an adversary possesses knowledge of
the choice of $(\overline{x}_1, \overline{x}_2)$, the outputs of such a device
on input $(0,0)$ will be fully predictable to her.) 

For each pair $(\overline{a}_1, \overline{a}_2) \in \{ 1, 2, 3 \}^2$, it is possible to
construct a completely classical two-part classical device
$D^{\overline{a}_1 \overline{a}_2}$ which performs as follows: if the input
pair is anything other than $(\overline{a}_1, \overline{a}_2)$,
the device generates a uniform distribution over all pairs of sequences
satisfying (\ref{mag1}--\ref{mag3}), and if the input pair
is $(\overline{a}_1, \overline{a}_2)$, the device generates a uniform distribution over all
pairs of sequences that satisfy (\ref{mag1}--\ref{mag2}) but
do not satisfy (\ref{mag3}).  Let $S = \{ (\overline{a}_1, \overline{a}_2) \mid (\overline{a}_1
= 0) \vee
(\overline{a}_2 = 0 ) \}$, and let $D^S$ denote a classical device
which, on any input,
chooses an element $(\overline{a}_1, \overline{a}_2) \in S$ uniformly at random
and behaves as $D^{(\overline{a}_1, \overline{a}_2)}$.  

The device $D^S$ loses the Magic Square game with probability
$1/5 = 0.2$ if its given input pair is in $S$, and loses with probability
$0$ if its given input pair is not in $S$.  The device $E$ constructed
above loses with probability $0$ if the given input pair is in $S$,
and loses with probability $\beta := (\frac{1}{2} - \frac{\sqrt{2}}{4} )
\approx 0.146$ otherwise.
Let $D$ be a two-part quantum device which behaves identically to
$D^S$ with probability $\beta/(0.2 + \beta )$, and  behaves identically to
$E$ with probability $0.2/(0.2 + \beta )$.  Then, the losing probability
for $D$ on any input pair is
\begin{eqnarray}
\frac{ 0.2 \beta }{0.2 + \beta} & \approx & 0.0845.
\end{eqnarray}

By construction, the output of $D$ on input $(0,0)$ is fully predictable
to an adversary who possess appropriate classical information.
On the other hand, by the symmetry of the conditional
input-output distribution of $D$, it is possible to take any input
pair $(\overline{a}_1, \overline{a}_2)$ and construct a simulation of $D$ for which
the output of $(\overline{a}_1, \overline{a}_2)$ is fully predictable.  Therefore,
even though $D$ achieves a superclassical score
at the Magic Square game, it is not randomness generating.

(See \cite{HorodeckiH:2010} for a related calculation involving
the magic square game.)

\section{The noise tolerance for the CHSH game}

Let $CHSH$ denote the two-player game in which all alphabets
$\mathcal{A}_i$ and $\mathcal{X}_i$ are equal to $\{ 0, 1 \}$,
the input distribution is uniform, and
\begin{eqnarray}
H ( \mathbf{a} , \mathbf{x} ) & = & \neg \left( x_1 \oplus x_2 \oplus (a_1 \wedge a_2) \right).
\end{eqnarray}

\begin{proposition}
\label{chshprop}
The quantity $W_{CHSH, 00}$ is equal to $3/4$.
\end{proposition}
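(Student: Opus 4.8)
The plan is to prove the two inequalities $W_{CHSH,00}\ge 3/4$ and $W_{CHSH,00}\le 3/4$ separately. For the lower bound I would exhibit a single deterministic classical strategy that is trivially classically predictable on every input: let both players ignore their inputs and always output $0$. This device wins on the input pairs $(0,0),(0,1),(1,0)$ (where the winning condition $x_1\oplus x_2=a_1\wedge a_2$ reads $x_1=x_2$) and loses only on $(1,1)$; since the input distribution is uniform its expected score is exactly $3/4$. Moreover a deterministic device satisfies $\phi=P^{\overline{x}}_{\overline{a}}\phi P^{\overline{x}}_{\overline{a}}$ and is therefore classically predictable on $\overline{a}=00$, so $W_{CHSH,00}\ge 3/4$.

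For the upper bound, the strategy is to rewrite the expected score in terms of the CHSH correlator and then exploit the classical-predictability constraint through the square of the CHSH operator. Writing $A=P^0_{0,1}-P^1_{0,1}$ and $A'=P^0_{1,1}-P^1_{1,1}$ for player $1$, and $B,B'$ analogously for player $2$ (all $\pm 1$ observables), a direct computation of the four conditional win probabilities shows that the expected score equals $\tfrac12+\tfrac18\,\Tr[\mathcal{S}\phi]$, where $\mathcal{S}=A\otimes B+A\otimes B'+A'\otimes B-A'\otimes B'$. Hence it suffices to prove $\Tr[\mathcal{S}\phi]\le 2$, i.e. that the Tsirelson value $2\sqrt{2}$ collapses to the classical value $2$ under the predictability hypothesis.

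The central step is the operator identity $\mathcal{S}^2=4\,\mathbb{I}-[A,A']\otimes[B,B']$, which follows from $A^2=A'^2=B^2=B'^2=\mathbb{I}$ by expanding $\mathcal{S}^2=A\otimes(B+B')+A'\otimes(B-B')$ squared. I would then observe that classical predictability on input $00$ forces $\phi$ to commute with $A\otimes\mathbb{I}$. Using only this commutation together with cyclicity of the trace, one gets $\Tr[(AA'\otimes[B,B'])\phi]=\Tr[(A'\otimes[B,B'])\phi\,(A\otimes\mathbb{I})]=\Tr[(A'A\otimes[B,B'])\phi]$, so the commutator term contributes $0$ and $\Tr[\mathcal{S}^2\phi]=4$. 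Finally the operator Cauchy--Schwarz inequality $\Tr[\mathcal{S}\phi]^2\le\Tr[\mathcal{S}^2\phi]\,\Tr[\phi]=4$ yields $\Tr[\mathcal{S}\phi]\le 2$, hence a score of at most $3/4$. Combining the two bounds gives $W_{CHSH,00}=3/4$.

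The step I expect to require the most care is justifying the commutation $[\phi,A\otimes\mathbb{I}]=0$ from classical predictability: one must check that the pinching condition $\phi=\sum_{x}P^{x}_{00}\phi P^{x}_{00}$ with $P^{x}_{00}=P^{x_1}_{0,1}\otimes P^{x_2}_{0,2}$ is equivalent to $\phi$ commuting with each joint projector $P^{x_1}_{0,1}\otimes P^{x_2}_{0,2}$, and that commutation with $A\otimes\mathbb{I}=\sum_{x_2}(P^0_{0,1}-P^1_{0,1})\otimes P^{x_2}_{0,2}$ then follows by linearity. Everything else is a short calculation; the pleasant feature is that only one player's predictability (here Alice's input-$0$ measurement) is needed to cancel the commutator term and reduce the bound to the classical value.
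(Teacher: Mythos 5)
Your proof is correct, but it takes a genuinely different route from the paper's. The paper argues entirely at the level of the input--output distribution: writing $p_u^v$ and $q_u^v$ for the two players' marginal output probabilities, it lower-bounds the losing probability of any device that is deterministic on input $00$ by $\tfrac{1}{4}\left|2p_0^0-1\right|=\tfrac14$ via the triangle inequality (the four conditional win/lose probabilities telescope to $2p_0^0-1$), and the classically predictable case follows by convexity, since a classically predictable state is a mixture of states deterministic on $\overline{a}$, as in the proof of Proposition~\ref{upperlimitprop}. That argument uses only no-signaling, so it in fact bounds all no-signaling behaviors, not just quantum tensor-product strategies. Your proof is instead operator-algebraic, in the style of Tsirelson's bound: you reduce the score to the correlator $\Tr[\mathcal{S}\phi]$, use the identity $\mathcal{S}^2=4\,\mathbb{I}-[A,A']\otimes[B,B']$, observe that the pinching condition defining classical predictability is equivalent to $\phi$ commuting with the four joint projectors (hence with $A\otimes\mathbb{I}$ by linearity), kill the commutator term by cyclicity of the trace, and finish with Cauchy--Schwarz, $\Tr[\mathcal{S}\phi]^2\le\Tr[\mathcal{S}^2\phi]\,\Tr[\phi]$, which indeed holds via the Hilbert--Schmidt inner product applied to $\sqrt{\phi}$ and $\mathcal{S}\sqrt{\phi}$. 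Each step checks out. What your route buys: it handles the pinching definition of classical predictability directly, with no decomposition into deterministic devices; it needs only one player's input-$0$ outcome to be predictable, a strictly weaker hypothesis; and you supply the matching lower bound (the all-zeros strategy) explicitly, which the paper leaves implicit. What the paper's route buys: it is more elementary, requiring no operator identities or projectivity of the measurements, and its conclusion is stronger in a different direction, since it applies to arbitrary no-signaling devices.
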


\begin{proof}
Let $D$ be a device compatible with the CHSH game
which gives deterministic outputs on input $00$.
We wish to prove that the losing probability of $D$ is at least $1/4$.
Let $A_1, A_2, X_1, X_2$ denote random variables representing
the inputs and outpus of $D$.  

Note that for any binary random variables $X,Y$, the probability
of the event $X \neq Y$ is at least $\left| \mathbf{P} ( X = 0 ) - \mathbf{P} ( Y = 0 ) \right|$
and the probability of the event $X = Y$ is at least $\left| \mathbf{ P} ( X = 0 ) - 1 + \mathbf{P} ( Y = 0) \right|$.  Observe the following inequalities, where we use the shorthand
$p_u^v = \mathbf{P} ( X_1 = v \mid A_1 = u )$ and
$q_u^v = \mathbf{P} ( X_2 = v \mid A_2 = u )$:
\begin{eqnarray*}
\mathbf{P}_D ( X_1 \oplus X_2 \neq A_1 \wedge A_2 ) & = &
\frac{1}{4} \left[ \mathbf{P}_D ( X_1 \neq X_2 \mid A = (0, 0 ) ) + 
\mathbf{P}_D ( X_1 \neq X_2 \mid A = (0, 1 ) ) \right. \\
&& \left. +  
\mathbf{P}_D ( X_1 \neq X_2 \mid A = (1, 0 ) )  +
\mathbf{P}_D ( X_1 = X_2 \mid A = (1, 1 ) ) \right]\\
& \geq & \left| p_0^0 - q_0^0 \right| + \left| p_0^0 - q_1^0 \right|
+ \left| q_0^0 - p_1^0 \right| + \left| q_1^0 - 1 +  p_1^0 \right| \\
& \geq & \frac{1}{4} \left| 2 p_0^0 - 1 \right|
\end{eqnarray*}
where we used the triangle inequality in the last step.  The final quantity is
equal to $1/4$, and this completes the proof.
\end{proof}

Thus $W_{CHSH, 00} = 0.75$, while $W_{CHSH} = \frac{1}{2} +
\frac{\sqrt{2}}{4} \approx 0.853$.  Therefore the noise
tolerance
provided by Theorem~\ref{mainthmsimple} for the CHSH game is approximately
$10.3\%$.

\commentout{

\section{Additional forms for the uncertainty principle}

The previous draft of this paper stated uncertainty principles (i.e.,
 inequalities
on the outcomes of anti-commuting measurements)
based on the Schatten norm.  These are worth preserving
(although we will not be using them directly in later proofs) because
they relate our work to other known uncertainty principles \cite{WehnerW:2010}
and to our previous paper \cite{MS14v3}.

\begin{theorem}
\label{anticommun}
Let $V$ be a finite-dimensional Hilbert space, and let $\tau$ be a positive semidefinite
operator on $V$ such that $\left\| \tau \right\|_{1+\epsilon}^{1+\epsilon} = 1$.  
Let $\{ A_0, A_1 \}$ and $\{ A_2, A_3 \}$ be projective measurements
on $V$ which are perfectly anticommuting (that is, $A_i A_j A_i = A_i/2$ for any
$i \in \{ 0, 1 \}$, $j \in \{ 2, 3 \}$).  Let $\tau_i = A_i \tau A_i$.  Then,
\begin{eqnarray}
\left\| \tau_0 + \tau_1 \right\|_{1+\epsilon}^{1+\epsilon}
& \leq & 1 - 2 \epsilon \left( \left\| \tau_3 \right\|_{1+\epsilon} 
- \frac{1}{2} \right)^2.
\end{eqnarray}
\end{theorem}

\begin{proof}
By an appropriate choice of basis, we may assume that
\begin{eqnarray}
A_0 = \left[ \begin{array}{c|c} \mathbb{I} & 0 \\ \hline 0 & 0 \end{array}
\right] &\textnormal{ and }& A_2 = \left[ \begin{array}{c|c} \mathbb{I}/2 & \mathbb{I} \\ \hline \mathbb{I} & \mathbb{I}/2
\end{array} \right].
\end{eqnarray}
Then, again writing $\tau$ in the form of (\ref{writingtau}), we have
\begin{eqnarray}
\left\| \tau_3 \right\|_{1+\epsilon} - \frac{1}{2} & = & \frac{1}{2} \left[ \left\| \tau \right\|_{1+\epsilon}
- 2 \left\| \tau_3 \right\|_{1+\epsilon} \right] \\
& \leq & \frac{1}{2} \left\| \tau - 2 \tau_3 \right\|_{1+\epsilon} \\
& = & \frac{1}{2}\left\| \tau - ( \tau_0 + \tau_1 ) \right\|_{1+\epsilon}.
\end{eqnarray}
The theorem follows from Proposition~\ref{binuncertaintyprop}.
\end{proof}

If $D$ is a device that has binary input and binary outputs, and its
measurements are perfectly anti-commuting, then Theorem~\ref{anticommun}
implies an inequality on the Schatten norms of the device states
of $D$ (and likewise on the adversary states of $D$, which are
isomorphic to the device states).

The next corollary follows easily from Theorem~\ref{anticommun}, and 
allows a comparison with the central uncertainty principle in \cite{MS14v3}
(Theorem E.2).

\begin{corollary}
There exists a continuous 
function $\Pi \colon (0,1] \times [0, 1] \to \mathbb{R}$ 
such that the following conditions hold.

\begin{enumerate}
\item 
For any binary device $D$ with perfectly anticommuting
measurements, if we let 
\begin{eqnarray}
\label{uncertaintyrequirement}
\eta & = & \frac{\Tr ( (\rho_1^1)^{1 + \epsilon} ) }{ \Tr ( \rho^{1 + \epsilon})},
\end{eqnarray}
then
\begin{eqnarray}
\frac{ \Tr ( (\rho_0^0)^{1 + \epsilon } + (\rho_0^1)^{1 + \epsilon} )}{\Tr ( \rho^{1 + \epsilon} )
} & \leq & \left( \frac{1}{2} \right)^{\epsilon \cdot \Pi ( \epsilon, \eta ) }.
\end{eqnarray}

\item The function $\Pi$ satisfies
\begin{eqnarray}
\label{simpleratecurve}
\lim_{(x,y) \to (0, y_0) } \Pi ( x, y ) & = & 2 ( \log e ) \left( y_0 - \frac{1}{2} \right)^2.
\end{eqnarray}
\end{enumerate}
\end{corollary}

We note that the limiting bound provided by (\ref{simpleratecurve}) is complementary (neither stronger nor weaker) to the bound provided by Remark E.3 in \cite{MS14v3}.

}

\newpage
\bibliographystyle{abbrv}
\bibliography{../quantumsec}

\end{document}